\def\ps@headings{%
\def\@oddhead{\mbox{}\scriptsize\rightmark \hfil \thepage}%
\def\@evenhead{\scriptsize\thepage \hfil \leftmark\mbox{}}%
\def\@oddfoot{}%
\def\@evenfoot{}}
\newtheorem{theorem}{Theorem}
\newtheorem{lemma}{Lemma}
\newtheorem{corollary}{Corollary}
\newtheorem{definition}{Definition}
\begin{document}
\title{Distributed Admission Control without\\Knowledge of the Capacity Region}
\author{
	\IEEEauthorblockN{Juan Jos\'e Jaramillo}
	\IEEEauthorblockA{Dept. of Applied Math and Engineering\\
	Universidad EAFIT\\
	Medell\'in, Colombia\\
	E-mail: jjaram93@eafit.edu.co}
	\and
	\IEEEauthorblockN{Lei Ying}
	\IEEEauthorblockA{School of Electrical, Computer\\and Energy Engineering\\
	Arizona State University\\
	Tempe, AZ 85287 USA\\
	E-mail: lei.ying.2@asu.edu}
	\thanks{Research supported in part by NSF Grants 1262329 and 1264012.}
}
\maketitle

%
%
\begin{abstract}

We consider the problem of distributed admission control without knowledge of the capacity region in single-hop wireless networks, for flows that require a pre-specified bandwidth from the network. We present an optimization framework that allows us to design a scheduler and resource allocator, and by properly choosing a suitable utility function in the resource allocator, we prove that existing flows can be served with a pre-specified bandwidth, while the link requesting admission can determine the largest rate that it can get such that it does not interfere with the allocation to the existing flows.

\end{abstract}

%
%
\section{Introduction}

Wireless networks are not only expected to transfer data, but also expected to provide multimedia services such as video and call conferencing. A common factor of such services is that they have Quality of Service (QoS) requirements. In this paper, we focus in the case that single-hop flows require a pre-specified bandwidth from the network, so we must design an algorithm that must determine if there are enough resources to fulfill a new request, given that the system is already serving a set of flows.

This is the problem of admission control, and it has been extensively studied for wireline networks. However, the case of admission control in wireless networks is more challenging, since the wireless channel is unreliable and susceptible to interference. Hence, the load imposed by a link that requests a pre-specified bandwidth from the network not only depends of the requested bandwidth, but also of the topology of the network due to contention among different transmitting links. Therefore, it is not obvious that the techniques developed for admission control in wireline networks can be used directly in wireless networks.

Several papers have highlighted the difficulties of guaranteeing QoS requirements in wireless networks. Specifically, \cite{Tang05} emphasizes the need to consider load balancing to maximize the number of admitted flows, while the importance of taking contention into account to determine the available bandwidth has been highlighted in \cite{Sanzgiri06, Chakeres07}. The problem is that determining the capacity region in wireless networks is not a trivial task, and the problem is further complicated if we want to implement admission control distributedly.

The main contributions of the paper are as follows:
\begin{enumerate}
	\item We present an utility maximization framework for resource allocation in single-hop wireless networks that allows us to design a distributed solution to the problem of admission control.

	\item By introducing a specific utility function, we prove that we can induce in the optimization framework an assignment of a pre-specified bandwidth to flows already admitted, while at the same time figuring out the maximum available bandwidth that can be assigned to the new flow.

	\item We present the conditions that the parameters in the utility function must fulfill to guarantee that the allocation in the stochastic system is asymptotically close to the desired allocation that has been induced in the static optimization framework.
\end{enumerate}

%
%
\begin{ARXIV}
\section{Related Work}
\label{related_work}

Estimating the available bandwidth to do admission control has been an active topic of research. In \cite{Yang05, Sanzgiri06}, the problem of determining the impact of contention to find the available bandwidth is studied for multi-hop wireless networks, while \cite{Chakeres07} studies the problem of bandwidth estimation at a node. In \cite{Xue03}, the problem of contention is taken into account under the implicit assumption that the interference and transmission range of a node are the same. Some heuristics to support QoS are presented in \cite{Lee00, Ahn02}, but contention is ignored during admission control.

The use of packet scheduling to guarantee QoS in multi-hop networks has been considered in \cite{Luo04}. Some solutions for admission control have be implemented centrally \cite{Ramanathan98, Shah05}, or assume a specific wireless technology like TDMA \cite{Liao02, Zhu02, Guo04} or CDMA over TDMA \cite{Lin99, Liu05}. Using implicit synchronization in CSMA/CA networks, a performance similar to TDM is achieved in \cite{Singh07}.

Under the assumption that requests can be split, \cite{Tang05} proposes a solution for multi-hop multichannel networks. If the requests cannot be split, a heuristic is presented using the least-congested, minimum-hop path to route requests.

The idea of achieving provably good performance without any assumptions on the model of the arrival requests was presented in \cite{Jaramillo08b}, using ideas first developed for wireline networks in \cite{Awerbuch93, Awerbuch94a, Plotkin95}. The idea is to study a worst-case model to determine the impact of imperfect decisions due to the lack of knowledge of future requests.

To the best of our knowledge, this is the first paper that proves that we can accurately estimate the available bandwidth during admission control in wireless networks in a distributed manner, and without the need to know or estimate the capacity region.
\end{ARXIV}

%
%
\section{System Model}
\label{system_model}

In this section we introduce the definitions and assumptions that we use to model our system.

We consider a network that is represented by a graph $\mathcal{G}=\{ \mathcal{N},\mathcal{D} \}$, where $\mathcal{N}$ is the set of nodes and $\mathcal{D}$ is the set of directional links such that for all $n_1,n_2 \in \mathcal{N}$, if $(n_1,n_2) \in \mathcal{D}$ then node $n_1$ can transmit to node $n_2$. To identify the links, we number them sequentially. Denote by $D=|\mathcal{D}|$ the number of links in the network, and by abusing notation, we sometimes use $l \in \cal D$ to mean $l\in\{1,2,\ldots, D\}$.

Consider that time is slotted, and assume for simplicity that all packets have fixed size such that one packet can be transmitted in a single time slot. We use a \emph{schedule} to denote the set of links that are allowed to transmit in a given time slot. A \emph{feasible schedule} $\mathbf{s}=\{ s_{l} \}_{l \in \mathcal{D}}$ is a $D$-dimensional vector that satisfies the following properties:
\begin{enumerate}
	\item $s_{l} \in \{ 0, 1 \}$ for all $l \in \mathcal{D}$, where $s_{l}=1$ means that link $l$ is scheduled to transmit in the current time slot, and $s_l=0$ otherwise. We assume that each link can transmit at most one packet per time slot.
	\item $\mathbf{s}$ must satisfy the interference constraints of the network. In other words, for any links $l_1, l_2 \in \mathcal{D}$, and any feasible schedule $\mathbf{s}$, if $s_{l1}=s_{l2}=1$, then links  $l_1$ and $l_2$ can transmit simultaneously without interfering with each other.
\end{enumerate}
We denote the set of all feasible schedules by $\mathcal{S}$.

We assume that we do not do channel estimation before a packet is transmitted, and we denote by $c_l$ the state of the channel in a given time slot at link $l \in \mathcal{D}$, where $c_l = 1$ means that the channel is ON and a packet transmission will be successful. Similarly, $c_l=0$ means that the channel is OFF. Assume that for all links, $c_l$ is a Bernoulli random variable with mean $\bar{c}_l>0$, that is independent across time slots, and we only get to know its actual value after attempting transmission. Denote by $\mathbf{c} = \{c_l \}_{l \in \mathcal{D}}$ the vector of channel states at a given time slot.

It must be noted that depending on the schedule and transmission parameters\footnote{E.g., modulation, power level, coding, etc.} that we use, link reliability may vary. For example, if you schedule only one link at any time, link reliability may be higher than trying to simultaneously schedule as many links as possible. Thus, we could extend the concept of the channel state to allow for the possibility that $\bar{c}_l$ could vary depending on the schedule or transmission parameters used, but this channel model, albeit more realistic, does not give us more insight into our problem. Hence, for ease of explanation, we will only consider in this paper a simple channel model.

The admission control problem that we are studying is the following: assuming that a subset $\mathcal{L} \subset \mathcal{D}$ can been served with mean flow rate $\bar{x}$, and that link $l \in \mathcal{D} \setminus \mathcal{L}$ requests to be admitted with flow rate  $\bar{x}$, can we determine, without knowledge of the capacity region and without disturbing the service rates in the set $\mathcal{L}$, if  link $l$ can be served?

%
%
\section{Optimization Framework}
\label{optimization_framework}

Now, we formally present the utility maximization framework that we will use later to develop a suitable admission controller. To do that, we consider the problem when we only need to serve a given subset of links.

Consider a subset $\mathcal{L} \subseteq \mathcal{D}$ of links that will be served. Let $L=|\mathcal{L}|$ be the number of links on the set. Without loss of generality, assume that links are numbered $1$ through $L$, and as mentioned before for the set $\mathcal{D}$, we sometimes use $l \in \mathcal{L}$ to mean $l \in \{1, 2, \ldots , L\}$.

For this case, we limit the set of feasible schedules to those  such that $s_l = 0$ for all $l \in \mathcal{D} \setminus \mathcal{L}$; that is, the set of feasible schedules that only serve links in $\mathcal{L}$. We denote the restricted set of feasible schedules by $\mathcal{S(\mathcal{L})} \subseteq \mathcal{S}$. A \emph{scheduling policy over $\mathcal{L}$} is defined as a probability function $P_{\mathcal{L}}(\mathbf{s})$ that indicates the probability of using schedule $\mathbf{s} \in \mathcal{S}$ in a given time slot, such that
\begin{equation*}
	P_{\mathcal{L}}(\mathbf{s}) \geq 0 \text{ for } \mathbf{s} \in \mathcal{S}(\mathcal{L}),
\end{equation*}
\begin{equation*}
	P_{\mathcal{L}}(\mathbf{s}) = 0 \text{ for all } \mathbf{s} \in \mathcal{S} \setminus \mathcal{S(\mathcal{L})}, \text{ and}
\end{equation*}
\begin{equation*}
	\sum_{\mathbf{s} \in \mathcal{S}} P_{\mathcal{L}}(\mathbf{s}) = 1.
\end{equation*}
Observe that, from the definition of $P_{\mathcal{L}}(\mathbf{s})$, we have
\begin{equation*}
	\sum_{\mathbf{s} \in \mathcal{S}} P_{\mathcal{L}}(\mathbf{s}) = \sum_{\mathbf{s} \in \mathcal{S}(\mathcal{L})} P_{\mathcal{L}}(\mathbf{s}) = 1.
\end{equation*}

Noting that a transmission can only be successful if the channel is ON, i.e. $c_l=1$, we have that $c_l s_l$ denotes the number of successful transmission attempts at link $l$ in a given time slot. We assume that we can schedule a transmission even if there are no packets available, in which case a \emph{null packet} is transmitted. Therefore, the average service rate to link $l \in \mathcal{D}$ is bounded by
\begin{equation}
\label{def_mu_l}
	\mu_l \leq \sum_{\mathbf{s} \in \mathcal{S}} \sum_{c_l = 0}^1 c_l s_l P(c_l) P_{\mathcal{L}}(\mathbf{s}),
\end{equation}
where \eqref{def_mu_l} makes explicit the fact that the distribution of $c_l$ is independent on the schedule $\mathbf{s}$. It should be noted from the definition of $P_{\mathcal{L}}(\mathbf{s})$ that $\mu_l = 0$ for all $l \in \mathcal{D} \setminus \mathcal{L}$. Simplifying \eqref{def_mu_l} we get
\begin{equation}
\label{simplified_mu_l}
	\mu_l \leq \sum_{\mathbf{s} \in \mathcal{S}(\mathcal{L})} \bar{c}_l s_l P_{\mathcal{L}}(\mathbf{s}).
\end{equation}

Observe that if in a given time slot we use schedule $\mathbf{s}$, then on average we will have $\bar{c}_l s_l$ successful transmissions at link $l$.
\begin{definition}
\label{def_av_rates}
	The set $\Gamma(\mathcal{L})$ of average successful transmissions at any time slot is the set of vectors $\{ \bar{c}_l s_l \}_{l \in \mathcal{D}}$ where $\mathbf{s} \in \mathcal{S}(\mathcal{L})$.
$\hfill \diamond$
\end{definition}

\begin{definition}
\label{def_capacity}
	The capacity region $\mathcal{C}(\mathcal{L})$, restricted to the set $\mathcal{L}$, is the set of average service rates $\mu = \{ \mu_l \}_{l \in \mathcal{D}}$ such that there exists a scheduling policy $P_{\mathcal{L}}(\mathbf{s})$ and \eqref{simplified_mu_l} holds true for all $l \in \mathcal{D}$.
$\hfill \diamond$
\end{definition}

From Definitions \ref{def_av_rates} and \ref{def_capacity}, we observe that $\mathcal{C}(\mathcal{L})$ is the convex hull of $\Gamma(\mathcal{L})$.

Associated with every link, we define a utility function $U_l(x_l)$, that is a function of the mean assigned flow rate $x_l$. By properly choosing a suitable utility function, it can be seen that the following optimization problem can have different resource allocation solutions:
\begin{equation}
\label{opt_problem_old}
	\max_{ \mu \in \mathcal{C}(\mathcal{L}), \mathbf{x} } \sum_{l \in \mathcal{L}} U_l(x_l)
\end{equation}
subject to
\begin{align*}
	& 0 \leq x_l \leq \mu_l \text{ for all } l \in \mathcal{L} \\
	& x_l = 0 \text{ for all } l \in \mathcal{D} \setminus \mathcal{L}.
\end{align*}
We will denote by $(\mu^*, \mathbf{x}^*)$ a solution to \eqref{opt_problem_old}. Note that the solution may not be unique, but the optimal value is. In Section \ref{admission_control_problem} we will introduce a utility function that will allow us to allocate resources such that we can solve the admission control problem.

\section{The Admission Control Problem}
\label{admission_control_problem}

To solve this problem, we first present a scheduler and resource allocator in Section \ref{scheduler_ra}, and in Section \ref{convergence_analysis} we prove the convergence results that guarantee that the stochastic system is stable and that resources are assigned according to the utility functions that we define for each link. Later in Section \ref{solution_admission_control} we introduce a suitable utility function and prove that it forces the resource allocator to guarantee a mean assigned rate of $\bar{x}$ to all links in $\mathcal{L}$, while at the same time it accurately estimates the maximum available rate for the new link, allowing us to make an admission control decision.

\subsection{Scheduler and Resource Allocator}
\label{scheduler_ra}
Using a dual decomposition approach similar to the one used in \cite{Jaramillo11b}, we propose the following scheduler for serving the set of links $\mathcal{L}$ at time slot $t$
\begin{equation}
\label{scheduler}
	\mathbf{s}(t) \in \mathop{\arg \max}_{ \mathbf{s} \in \mathcal{S}(\mathcal{L}) } \sum_{l \in \mathcal{L}} q_l(t) \bar{c}_l s_l
\end{equation}
and the distributed resource allocator\footnote{Note that on the literature it is also known as \emph{congestion controller}. Later we will prove that it allocates resources to links depending on the specified utility functions. Thus, in this paper we will prefer the term \emph{resource allocator} to make clear the fact that later we will choose the utility functions to force the system to allocate resources such that we can make an admission control decision.} at link $l \in \mathcal{L}$
\begin{equation}
\label{congestion_controller}
	x_l(t) \in \mathop{\arg \max}_{ 0 \leq x_l \leq X_{max} } \frac{1}{\epsilon} U_l(x_l) - q_l(t) x_l,
\end{equation}
where  $\epsilon>0$ is a fixed-size parameter, $X_{max}>0$ is a large enough parameter, and $\mathbf{q}(t) = \{ q_l(t) \}_{l \in \mathcal{L}}$ are the queue lengths at every link. We need to convert $\mathbf{x}(t) = \{ x_l(t) \}_{l \in \mathcal{L}}$, which may not necessarily be an integer, into the number of packets that are admitted into the network at time slot $t$, which we denote by $\mathbf{a}(t)$. To do the conversion, which can be done in many different ways, we specify that $a_l(t)$ is a random variable with mean $x_l(t)$ and finite variance upper-bounded by $\sigma^2$, such that $P(a_l(t)=0)>0$ and $P(a_l(t)=1)>0$. The last two conditions guarantee that the Markov chain $\mathbf{q}(t)$ is irreducible and aperiodic.

Denoting by
\begin{equation}
\label{def_dl}
d_l(t) = c_l(t) s_l(t)
\end{equation}
the number of successfully transmitted packets at link $l$ in time slot $t$, we note that $\mathbf{q}(t)$ is updated with the equations
\begin{equation*}
	q_l(t+1) = \left[ q_l(t) + a_l(t) - d_l(t) \right]^+
\end{equation*}
for all $l \in \mathcal{L}$, where for any $\alpha \in \mathbb{R}$, $\alpha^+ = \max \{ \alpha,0 \}$. 

\subsection{Convergence Analysis}
\label{convergence_analysis}
We now proceed to present the convergence results that prove that \eqref{scheduler} and \eqref{congestion_controller} keep the queues stable and that allocate resources optimally.

\begin{lemma}
\label{queue_stability}
	Assume that $X_{max} \geq \max_{\mu \in \mathcal{C}(\mathcal{L})} \left\{ \max_{l \in \mathcal{L}} \{ \mu_l \} \right\}$. If there exists $\mu(\Delta) \in \mathcal{C}(\mathcal{L})/(1+\Delta)$ for some $\Delta > 0$ such that $\mu_l(\Delta) > 0$ for all $l \in \mathcal{L}$, then
\begin{equation*}
	\lim_{t \rightarrow \infty} E \left[ \sum_{l \in \mathcal{L}} q_l(t) \right] \leq B_3 + \frac{1}{\epsilon} B_4
\end{equation*}
given $\epsilon > 0$ and for some $B_3>0$, $B_4>0$.
$\hfill \diamond$
\end{lemma}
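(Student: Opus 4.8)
The plan is to run a standard quadratic-Lyapunov drift argument, in which the two greedy rules \eqref{scheduler} and \eqref{congestion_controller} are exactly the per-slot minimizers that force the drift to be negative outside a bounded region. Take $V(\mathbf{q}(t)) = \frac{1}{2}\sum_{l\in\mathcal{L}} q_l^2(t)$. Squaring the queue update and using $([\alpha]^+)^2 \le \alpha^2$ gives $q_l^2(t+1) \le q_l^2(t) + (a_l(t)-d_l(t))^2 + 2q_l(t)(a_l(t)-d_l(t))$. Summing over $l$ and taking the conditional expectation given $\mathbf{q}(t)$, I would use $E[a_l(t)\mid\mathbf{q}(t)] = x_l(t)$ (the conversion has mean $x_l(t)$) and $E[d_l(t)\mid\mathbf{q}(t)] = \bar{c}_l s_l(t)$ (the channel is independent of the schedule with mean $\bar{c}_l$), and absorb the second-moment terms into a single constant $B$, finite because $\mathrm{Var}(a_l)\le\sigma^2$, $x_l\le X_{max}$, and $d_l\in\{0,1\}$. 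This yields the one-step bound $\Delta V(t) \le B + \sum_l q_l(t) x_l(t) - \sum_l q_l(t)\bar{c}_l s_l(t)$.

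The heart of the argument is to convert this into a negative drift using the optimality of the two rules against a carefully chosen comparison policy. First, because \eqref{congestion_controller} maximizes $\frac{1}{\epsilon}U_l(x_l) - q_l(t)x_l$ over $[0,X_{max}]$, comparing its value at $x_l(t)$ to its value at any feasible $\tilde{x}_l$ gives $q_l(t)x_l(t) \le q_l(t)\tilde{x}_l + \frac{1}{\epsilon}\big(U_l(x_l(t)) - U_l(\tilde{x}_l)\big)$; since each $U_l$ is continuous on the compact set $[0,X_{max}]$, its oscillation is bounded and the utility differences contribute at most a constant over $\epsilon$, which is precisely the origin of the $\frac{1}{\epsilon}B_4$ term. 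Second, since $\mathcal{C}(\mathcal{L})$ is the convex hull of $\Gamma(\mathcal{L})$, the scheduler \eqref{scheduler} attains $\max_{\mu\in\mathcal{C}(\mathcal{L})}\sum_l q_l(t)\mu_l$, so $\sum_l q_l(t)\bar{c}_l s_l(t) \ge \sum_l q_l(t)\mu_l$ for every $\mu\in\mathcal{C}(\mathcal{L})$.

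To exploit the slack hypothesis, I would take the comparison point to be $\tilde{x}_l = \mu_l(\Delta)$ in the resource-allocator bound (legitimate because $\mu_l(\Delta)\le X_{max}$ by the assumption on $X_{max}$) and $\mu = (1+\Delta)\mu(\Delta)\in\mathcal{C}(\mathcal{L})$ in the scheduler bound. Substituting both, the linear terms collapse to $-\Delta\sum_l q_l(t)\mu_l(\Delta) \le -\Delta\,(\min_l\mu_l(\Delta))\sum_l q_l(t)$, where $\min_l\mu_l(\Delta)>0$ by hypothesis. This produces $\Delta V(t) \le B + \frac{1}{\epsilon}B_4' - \delta\sum_l q_l(t)$ for a strictly positive coefficient $\delta = \Delta\min_l\mu_l(\Delta)$, i.e. a uniform negative drift once $\sum_l q_l(t)$ is large.

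Finally I would invoke the Foster--Lyapunov machinery: telescoping the drift inequality over $t=0,\dots,T-1$, using $V\ge 0$, and dividing by $\delta T$ bounds the time-average of $E[\sum_l q_l(t)]$ by $(B + \frac{1}{\epsilon}B_4')/\delta$, which is already of the form $B_3 + \frac{1}{\epsilon}B_4$. The negative drift also certifies positive recurrence of the irreducible, aperiodic chain $\mathbf{q}(t)$, so the stationary expectation exists and equals the time-average limit, upgrading the Ces\`aro bound to the stated $\lim_{t\to\infty} E[\sum_l q_l(t)]$. The main obstacle is the second and third paragraphs: recognizing that the two greedy rules should be benchmarked against the single scaled interior point $\mu(\Delta)$---used simultaneously as the allocator's comparison rate and, after scaling by $1+\Delta$, as the scheduler's comparison service vector---so that the utility gap cleanly separates out as the $\frac{1}{\epsilon}$ term while the remaining linear terms give strict negativity.
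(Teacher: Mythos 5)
Your proposal is correct and follows essentially the same route as the paper: a quadratic Lyapunov drift argument that benchmarks the resource allocator against the interior point $\mu(\Delta)$ and the scheduler against $(1+\Delta)\mu(\Delta)\in\mathcal{C}(\mathcal{L})$ (using that the scheduler's max over schedules equals the max over the convex hull), with the bounded oscillation of $U_l$ on $[0,X_{max}]$ producing the $\frac{1}{\epsilon}B_4$ term and $\Delta\min_{l}\mu_l(\Delta)>0$ giving the negative drift coefficient. The only cosmetic difference is at the end, where you telescope to a Ces\`aro bound and invoke ergodicity, while the paper passes directly to the stationary limit via positive recurrence; both rest on the same Foster--Lyapunov conclusion.
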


\begin{ARXIV}
The proof can be found in Appendix \ref{proof_queue_stability}.
\end{ARXIV}
\begin{INFOCOM}
The interested reader can see the proof of the Lemma in \cite{Jaramillo13}.
\end{INFOCOM}
Lemma \ref{queue_stability} tells us that, if there exists a resource allocation vector such that we can serve all the links with a non-zero mean rate, then our algorithm can stabilize the queues.

\begin{lemma}
\label{optimality_result}
	Given $\epsilon > 0$, and assuming that $U_l(\cdot)$ is a concave function, we have that
\begin{equation}
\label{optimality_equation}
	\lim_{T \rightarrow \infty} \sum_{l \in \mathcal{L}} \left\{ U_l( x_l^* ) - U_l\left( E\left[ \frac{1}{T} \sum_{t=1}^T  x_l(t) \right] \right) \right\} \leq B_1 \epsilon,
\end{equation}
for some $B_1>0$, where $\mathbf{x}^*$ a solution to \eqref{opt_problem_old} and  $\mathbf{x}(t)$ is a solution to \eqref{congestion_controller}.
$\hfill \diamond$
\end{lemma}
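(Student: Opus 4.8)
The plan is to run a standard quadratic-Lyapunov drift argument, exploiting the fact that both the scheduler \eqref{scheduler} and the resource allocator \eqref{congestion_controller} are defined as maximizers, so that a well-chosen feasible test point turns the drift into a utility gap. I would take $V(\mathbf{q}) = \frac{1}{2}\sum_{l\in\mathcal{L}} q_l^2$ as the Lyapunov function. Squaring the update $q_l(t+1) = [q_l(t)+a_l(t)-d_l(t)]^+$ and using $([\beta]^+)^2\le\beta^2$, the one-slot conditional drift satisfies
\[ E[V(\mathbf{q}(t+1)) - V(\mathbf{q}(t)) \mid \mathbf{q}(t)] \le B + \sum_{l\in\mathcal{L}} q_l(t)\big(x_l(t) - \bar{c}_l s_l(t)\big), \]
where I absorb $\frac{1}{2}\sum_l E[(a_l(t)-d_l(t))^2\mid\mathbf{q}(t)]$ into a finite constant $B$ using the bounded variance $\sigma^2$ and mean $x_l(t)\le X_{max}$ of $a_l(t)$ together with $d_l(t)\in\{0,1\}$, and I use $E[a_l(t)\mid\mathbf{q}(t)] = x_l(t)$ together with $E[d_l(t)\mid\mathbf{q}(t)] = \bar{c}_l s_l(t)$.

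The two key steps test the optimality of \eqref{scheduler} and \eqref{congestion_controller} against the static optimum $(\mu^*,\mathbf{x}^*)$ of \eqref{opt_problem_old}. Since $x_l^* \le \mu_l^* \le X_{max}$, the point $x_l^*$ is feasible in \eqref{congestion_controller}, so $\frac{1}{\epsilon}U_l(x_l(t)) - q_l(t)x_l(t) \ge \frac{1}{\epsilon}U_l(x_l^*) - q_l(t)x_l^*$, which rearranges to $q_l(t)x_l(t) \le q_l(t)x_l^* + \frac{1}{\epsilon}(U_l(x_l(t)) - U_l(x_l^*))$. For the scheduler, because $\mathcal{C}(\mathcal{L})$ is the convex hull of $\Gamma(\mathcal{L})$ and $\sum_l q_l(t)\bar{c}_l s_l$ is linear in $\mathbf{s}$, the maximizing schedule dominates the convex combination realizing $\mu^*$, so $\sum_l q_l(t)\bar{c}_l s_l(t) \ge \sum_l q_l(t)\mu_l^* \ge \sum_l q_l(t)x_l^*$. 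Substituting both inequalities, the $\sum_l q_l(t)x_l^*$ terms cancel and the drift collapses to
\[ E[V(\mathbf{q}(t+1)) - V(\mathbf{q}(t)) \mid \mathbf{q}(t)] \le B - \frac{1}{\epsilon}\sum_{l\in\mathcal{L}}\big(U_l(x_l^*) - U_l(x_l(t))\big). \]

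From here the argument is routine. I would take total expectations, telescope over $t=1,\dots,T$, divide by $T$, and drop the nonnegative term $E[V(\mathbf{q}(T+1))]$ to obtain $\sum_l\big(U_l(x_l^*) - \frac{1}{T}\sum_{t=1}^T E[U_l(x_l(t))]\big) \le \epsilon B + \epsilon E[V(\mathbf{q}(1))]/T$. Finally, concavity of $U_l$ together with Jensen's inequality (applied to the random pair consisting of a uniformly chosen slot index and the process randomness) gives $\frac{1}{T}\sum_{t=1}^T E[U_l(x_l(t))] \le U_l\big(E[\frac{1}{T}\sum_{t=1}^T x_l(t)]\big)$, so replacing the per-slot utilities by the utility of the time-and-ensemble averaged rate only enlarges the left-hand side; letting $T\to\infty$ then yields \eqref{optimality_equation} with $B_1 = B$.

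I expect the main obstacle to be the coupling step: recognizing that the two separately defined maximizations must be tested against the matched pair $(\mu^*,\mathbf{x}^*)$---$\mathbf{x}^*$ in the resource allocator and $\mu^*$ (through the convex-hull and linearity argument) in the scheduler---so that the otherwise unbounded $\sum_l q_l(t)(\cdot)$ terms cancel exactly and leave a clean utility gap. The Jensen step is the other point requiring care, since it is precisely what converts a per-slot expected-utility bound into the statement's bound on the utility of the averaged rate.
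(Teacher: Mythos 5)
Your proposal is correct and takes essentially the same route as the paper: the drift bound you derive inline is exactly the paper's auxiliary Lemma \ref{weak_expected_drift} (with your convex-hull/linearity step for the scheduler being its Lemma \ref{optimization_equivalence}), and your telescoping, dropping of $E[V(\mathbf{q}(T+1))]$, and Jensen steps match the paper's argument verbatim. The only difference is organizational --- the paper factors the drift inequality into separate lemmas, while you derive it in one pass.
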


\begin{ARXIV}
In Appendix \ref{proof_optimality_result} the proof can be found.
\end{ARXIV}
\begin{INFOCOM}
For the proof, the interested reader is referred to \cite{Jaramillo13}.
\end{INFOCOM}
Lemma \ref{optimality_result} tells us that our algorithm is asymptotically optimal.

The proofs in Lemmas \ref{queue_stability} and \ref{optimality_result} follow the techniques in \cite{Jaramillo11b}, which are similar to the ideas in \cite{Neely05}. Slightly different results can be derived using the methods in \cite{Stolyar05, Eryilmaz05}.

\subsection{Solution to the Admission Control Problem}
\label{solution_admission_control}

So far, in Section \ref{optimization_framework} we presented a model to serve a subset $\mathcal{L} \subseteq \mathcal{D}$ of links that allowed us in Section \ref{scheduler_ra} to develop a suitable scheduler and resource allocator. We will now consider the following problem: given that all the links in the set $\mathcal{L}$ can be served with an assigned flow rate $\bar{x}$, and that link $l \in \mathcal{D} \setminus \mathcal{L}$ requests to be admitted with flow rate  $\bar{x}$, can we determine, without knowledge of the capacity region and without disturbing the service rates of the links in the set $\mathcal{L}$, if  link $l$ can be served?

To do that, we will first introduce the notation $\mathcal{L}^+$ to denote the set formed by the set of links in  $\mathcal{L}$ and the link that wants to be admitted. Thus, we have that $|\mathcal{L}^+| = L+1$, and by abusing notation we will write $l \in \mathcal{L}^+$ to mean $l \in \{1, 2, \ldots, L , L+1\}$, where the index $L+1$ is assigned to the new link.

We will prove that by using the following utility function $U_l(x_l)$ at link $l \in \mathcal{L}^+$
\begin{equation*}
	U_l(x_l)=
		\left\{
			\begin{array}{ll}
				u_lx_l & \text{if } x_l \leq \bar{x} \\
				u_l\bar{x} & \text{if } x_l > \bar{x},
			\end{array}
		\right.
\end{equation*}
where $u_l>0$ is a suitable constant, we can achieve the desired goal. Note that the utility function increases with $x_l$ up to $\bar{x}$, and after that there is no gain in increasing the flow rate. Also, note that to completely define the utility function we only need to specify the \emph{utility parameter} $u_l$ since $\bar{x}$ is fixed.

The analysis will assume that
\begin{equation}
\label{utility_function_L}
	U_l(x_l)=
		\left\{
			\begin{array}{ll}
				u x_l & \text{if } x_l \leq \bar{x} \\
				u \bar{x} & \text{if } x_l > \bar{x}
			\end{array}
		\right.
	\text{ for all } l \in \mathcal{L}
\end{equation}
and
\begin{equation}
\label{utility_function_L+}
	U_l(x_l)=
		\left\{
			\begin{array}{ll}
				u_n x_l & \text{if } x_l \leq \bar{x} \\
				u_n \bar{x} & \text{if } x_l > \bar{x}
			\end{array}
		\right.
	\text{ for } l = L+1,
\end{equation}
and we will compare \eqref{opt_problem_old} and the optimization problem
\begin{equation}
\label{opt_problem_new}
	\max_{ \mu \in \mathcal{C}(\mathcal{L}^+), \mathbf{x} } \sum_{l \in \mathcal{L}^+} U_l(x_l)
\end{equation}
subject to
\begin{align*}
	& 0 \leq x_l \leq \mu_l \text{ for all } l \in \mathcal{L}^+ \\
	& x_l = 0 \text{ for all } l \in \mathcal{D} \setminus \mathcal{L}^+,
\end{align*}
to find the conditions on $u$ and $u_n$ such that we can make an admission control decision. We will call the optimization problem \eqref{opt_problem_old} over the set $\mathcal{L}$ the \emph{old system}, and the optimization \eqref{opt_problem_new} over the set $\mathcal{L}^+$ the \emph{new system}.

\begin{theorem}
\label{condition_cc1}
	If the utility functions are given by \eqref{utility_function_L}, \eqref{utility_function_L+}, and assuming the old system \eqref{opt_problem_old} can assign a flow rate of $\bar{x}$ to all links in $\mathcal{L}$, and if
\begin{equation*}
u_n < u \min_{l \in \mathcal{L}^+} \left\{ \bar{c}_l \right\} / \bar{c}_{L+1},
\end{equation*}
then the new system \eqref{opt_problem_new} will assign a flow rate of $\bar{x}$ to all $l \in \mathcal{L}$ and a rate $\hat{x} \leq \bar{x}$ to link $L+1$, where $\hat{x}$ is the maximum rate that can be assigned to link $L+1$ that allows to assign $\bar{x}$ to all other links.
$\hfill \diamond$
\end{theorem}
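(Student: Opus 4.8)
The plan is to show that the optimal value of the new system \eqref{opt_problem_new} equals $uL\bar{x}+u_n\hat{x}$ and is attained by the allocation giving $\bar{x}$ to every $l\in\mathcal{L}$ and $\hat{x}$ to link $L+1$. First I would record some reductions. Because each $U_l$ is flat above $\bar{x}$, there is no loss in restricting to $x_l\le\bar{x}$, so the objective becomes the linear form $u\sum_{l\in\mathcal{L}}x_l+u_n x_{L+1}$. Since the constraints are $0\le x_l\le\mu_l$ with $\mu\in\mathcal{C}(\mathcal{L}^+)$, the feasible set of rate vectors $\mathbf{x}$ is exactly the downward closure of the polytope $\mathcal{C}(\mathcal{L}^+)$, which is again a polytope; in particular an optimizer exists, and $\hat{x}$ is well defined as the maximal $x_{L+1}$ (capped at $\bar{x}$) for which $(\bar{x},\ldots,\bar{x},x_{L+1})$ is feasible. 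Since $\mathcal{S}(\mathcal{L})\subseteq\mathcal{S}(\mathcal{L}^+)$, the hypothesis that the old system serves $\mathcal{L}$ at $\bar{x}$ guarantees $(\bar{x},\ldots,\bar{x},0)$ is feasible, whence $\hat{x}\ge0$ exists and $\hat{x}\le\bar{x}$ by the cap. This gives the easy direction at once: the candidate allocation is feasible, so the optimal value of \eqref{opt_problem_new} is at least $uL\bar{x}+u_n\hat{x}$.

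Second, I would reduce the hard (upper-bound) direction to a single inequality. For an arbitrary feasible $\mathbf{x}$ with $x_l\le\bar{x}$, write $D=\sum_{l\in\mathcal{L}}(\bar{x}-x_l)\ge0$ for the total deficit of the old links, so the objective is $uL\bar{x}-uD+u_n x_{L+1}$. Proving optimality of the candidate therefore amounts to showing $u_n(x_{L+1}-\hat{x})\le uD$ for every feasible $\mathbf{x}$. The content is a trade-off bound on the capacity region: raising link $L+1$'s rate above $\hat{x}$ must cost old-link rate at a controlled exchange ratio, namely
\[
x_{L+1}-\hat{x}\le\frac{\bar{c}_{L+1}}{\min_{l\in\mathcal{L}^+}\bar{c}_l}\,D.
\]
Granting this, the hypothesis $u_n<u\,\min_{l\in\mathcal{L}^+}\{\bar{c}_l\}/\bar{c}_{L+1}$ says precisely that $\bar{c}_{L+1}/\min_{l\in\mathcal{L}^+}\bar{c}_l<u/u_n$, so multiplying by $D\ge0$ yields $u_n(x_{L+1}-\hat{x})\le uD$ as required (the case $x_{L+1}\le\hat{x}$ being trivial). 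Combined with the easy direction, the optimal value is $uL\bar{x}+u_n\hat{x}$; and since any optimizer with all old links at $\bar{x}$ forces $x_{L+1}\le\hat{x}$ by the definition of $\hat{x}$ and downward closure, every optimizer assigns $\bar{x}$ to $\mathcal{L}$ and $\hat{x}$ to $L+1$.

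The main obstacle is this trade-off bound, which I would prove from the scheduling structure behind Definition \ref{def_capacity}. Writing a feasible $\mu$ as $\mu_l=\sum_{\mathbf{s}}\bar{c}_l s_l P_{\mathcal{L}^+}(\mathbf{s})$ and $p=\Pr[s_{L+1}=1]=\mu_{L+1}/\bar{c}_{L+1}$, the key fact is that in every slot where link $L+1$ transmits every interfering link must be silent; hence buying extra scheduling probability for $L+1$ beyond the amount used at the $\hat{x}$-optimum forces an equal loss of probability on some interfering old link, and each unit of lost probability costs that link $\bar{c}_l\ge\min_{l\in\mathcal{L}^+}\bar{c}_l$ of rate while buying only $\bar{c}_{L+1}$ for $L+1$. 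Made precise, this is a bound $\lambda_l\le\bar{c}_{L+1}/\bar{c}_l$ on the optimal dual price $\lambda_l$ of the constraint $\mu_l\ge\bar{x}$ in the linear program defining $\hat{x}$, after which concavity of that program's optimal value in the deficits delivers the displayed inequality. The delicate point is that with general, non-pairwise interference a single $L+1$-slot may require several neighbors to be silent, so one cannot argue by a naive one-for-one swap of links inside a schedule; the clean route is the LP-duality/sensitivity argument on the capacity polytope rather than a combinatorial exchange.

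Finally, I would note the alternative viewpoint that ties the result to the algorithm of Section \ref{scheduler_ra}: by Lagrangian duality the optimizer of \eqref{opt_problem_new} is characterized by prices $q^*_l\ge0$ under which the schedule is max-weight with weights $q^*_l\bar{c}_l$ as in \eqref{scheduler}, each $x^*_l$ maximizes $U_l(x_l)-q^*_l x_l$ as in \eqref{congestion_controller}, and $q^*_l(\mu^*_l-x^*_l)=0$. If some old link $l_0$ were served below $\bar{x}$ then $q^*_{l_0}\ge u$, while (after checking that $L+1$ must actually be served, else we are back in the old system) $q^*_{L+1}\le u_n$, so $q^*_{L+1}\bar{c}_{L+1}\le u_n\bar{c}_{L+1}<u\min_{l\in\mathcal{L}^+}\bar{c}_l\le q^*_{l_0}\bar{c}_{l_0}$; link $L+1$ is strictly dominated in the max-weight competition, which is the intuition for why no old link is sacrificed. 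Turning this intuition into a proof still passes through the capacity trade-off and the value comparison of the preceding paragraphs, so that remains the decisive step; I would also keep the minor bookkeeping that if the unconstrained maximal rate for $L+1$ exceeds $\bar{x}$ the link is simply admitted at $\bar{x}$, consistent with $\hat{x}\le\bar{x}$.
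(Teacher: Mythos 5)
Your reduction is sound, and in some ways cleaner than the paper's setup: capping each $x_l$ at $\bar{x}$, writing the objective of \eqref{opt_problem_new} as $uL\bar{x}-uD+u_nx_{L+1}$ with $D$ the total deficit of the old links, and observing that the theorem is equivalent to the trade-off inequality $u_n\left(x_{L+1}-\hat{x}\right)\leq uD$ for every feasible point is all correct, as is the final uniqueness bookkeeping. The genuine gap is that this trade-off inequality --- which you yourself identify as ``the decisive step'' --- is never proved. You claim it follows from a bound $\lambda_l\leq\bar{c}_{L+1}/\bar{c}_l$ on the optimal dual prices of the LP defining $\hat{x}$, plus concavity of that LP's value in the deficits. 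The concavity/sensitivity part is standard, but the dual-price bound is not: asserting that some dual optimum lies in the box $\{\lambda:\lambda_l\leq\bar{c}_{L+1}/\bar{c}_l\}$ is \emph{equivalent} (by LP duality, adding those box constraints to the dual leaves its value unchanged exactly when paid deficits at those prices cannot help the primal) to the trade-off inequality itself, so invoking it is circular unless you prove it. The only justification you offer is the one-for-one swap heuristic --- silencing an interfering old link frees time that $L+1$ buys at exchange rate $\bar{c}_{L+1}/\bar{c}_l$ --- and that is precisely the ``naive swap'' you elsewhere (rightly) say cannot be trusted: the schedules you cannibalize to restore a deficit link may be serving other old links, whose restoration displaces yet other links, and nothing in the proposal controls this cascade.

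For comparison, the paper's proof is exactly the combinatorial exchange you disavow: it takes a policy $P^1_{\mathcal{L}^+}$ realizing $(\bar{x},\dots,\bar{x},\hat{x})$, moves probability $\delta$ from a schedule with $s_L=1$, $s_{L+1}=0$ to one with $s_L=0$, $s_{L+1}=1$, and checks that the resulting utility change $-\delta\left(u\bar{c}_L-u_n\bar{c}_{L+1}\right)$ is nonpositive under the stated condition on $u_n/u$. So your framing (a global value comparison via LP sensitivity) is genuinely different from the paper's (a local perturbation of the optimal schedule), and your objection to exchange arguments in fact applies to the paper's own proof, which only rules out a particular family of perturbations rather than all feasible directions. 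But identifying the right lemma is not the same as proving it: to close your argument you must actually establish that some dual optimum satisfies $\lambda_l\leq\bar{c}_{L+1}/\min_{l'\in\mathcal{L}^+}\bar{c}_{l'}$ for all $l\in\mathcal{L}$ --- equivalently, that link $L+1$'s gain per unit of total old-link deficit never exceeds $\bar{c}_{L+1}/\min_{l'\in\mathcal{L}^+}\bar{c}_{l'}$ --- and that claim carries the entire content of Theorem \ref{condition_cc1}; it is not a routine consequence of duality.
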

\begin{proof}
If there exists $\mu \in \mathcal{C}(\mathcal{L}^+)$ such that $\mu_l = \bar{x}$ for all $l \in \mathcal{L}^+$, we are done since  $(\mu^*, \mathbf{x}^*) =  (\mu, \{\bar{x}\}_{l \in \mathcal{L}^+})$ is a solution to the problem of the new system \eqref{opt_problem_new}. Thus, we will assume that link $L+1$ interferes with the set $\mathcal{I} \subseteq \mathcal{L}$ such that if $\mu_{L+1} > \hat{x}$, then some link in $\mathcal{I}$ must get an assigned flow rate strictly less than $\bar{x}$. Without loss of generality, we will proceed to do the analysis for link $L$ assuming that it is in $\mathcal{I}$.

Define the scheduling policy $P_{\mathcal{L}^+}^1(\mathbf{s})$ such that $x^1_{L+1} = \hat{x} = \sum_{\mathbf{s} \in \mathcal{S}(\mathcal{L^+})} \bar{c}_{L+1} s_{L+1} P_{\mathcal{L^+}}^1(\mathbf{s})$ and $x^1_l = \bar{x} = \sum_{\mathbf{s} \in \mathcal{S}(\mathcal{L^+})} \bar{c}_l s_l P_{\mathcal{L^+}}^1(\mathbf{s})$ for all $l \in \mathcal{L}$, where $\hat{x} < \bar{x}$ is the maximum rate that can be assigned to link $L+1$ that allows to assign $\bar{x}$ to all other links in $\mathcal{I}$.

Now consider schedules $\mathbf{s^1}, \mathbf{s^2} \in \mathcal{S}(\mathcal{L}^+)$ such that $P_{\mathcal{L}^+}^1(\mathbf{s^1}) >0$, $s^1_{L} = 1$, $s^1_{L+1} = 0$,  $s^2_{L} = 0$, $s^2_{L+1} = 1$. We know that they exist since link $L$ is in $\mathcal{I}$ and from the definition of $\hat{x}$. For small enough $\delta >0$, define the policy $P_{\mathcal{L}^+}^2(\mathbf{s})$ as follows
\begin{equation*}
	P_{\mathcal{L}^+}^2(\mathbf{s}) =
	\left\{
		\begin{array}{ll}
			P_{\mathcal{L}^+}^1(\mathbf{s}) & \text{ for } \mathbf{s} \neq \mathbf{s^1}, \mathbf{s^2} \\
			P_{\mathcal{L}^+}^1(\mathbf{s}) - \delta & \text{ for } \mathbf{s} = \mathbf{s^1} \\
			P_{\mathcal{L}^+}^1(\mathbf{s}) + \delta & \text{ for } \mathbf{s} = \mathbf{s^2}.
		\end{array}
	\right.
\end{equation*}

In this case, from the definition of $\mathcal{I}$ and since $x^2_l = \sum_{\mathbf{s} \in \mathcal{S}(\mathcal{L^+})} \bar{c}_l s_l P_{\mathcal{L^+}}^2(\mathbf{s})$ for all $l \in \mathcal{L}^+$, we have the following rate allocation
\begin{equation*}
	x^2_{l} =
	\left\{
	\begin{array}{ll}
		\bar{x} - \delta \bar{c}_L & \text{ if } l = L \\
		\hat{x} + \delta  \bar{c}_{L+1}  & \text{ if } l = L+1.
	\end{array}
	\right.
\end{equation*}

Comparing the objective function for both policies we have
\begin{align*}
	& \sum_{l \in \mathcal{L}^+} U_l(x^1_l) - \sum_{l \in \mathcal{L}^+} U_l(x^2_l) \\
	& \geq \left[ \sum_{l \in \mathcal{L}} u \bar{x} + u_n \hat{x} \right] \\
	& \qquad {} - \left[ \sum_{l=1}^{L-1} u \bar{x} + u \left( \bar{x} - \delta \bar{c}_L \right) + u_n \left( \hat{x} + \delta  \bar{c}_{L+1} \right) \right] \\
	& = u\delta \bar{c}_L - u_n \delta  \bar{c}_{L+1} \\
	& \geq u\delta \min_{l \in \mathcal{L}^+} \left\{ \bar{c}_l \right\} - u_n \delta  \bar{c}_{L+1} \\
	& = \delta \left( u \min_{l \in \mathcal{L}^+} \left\{ \bar{c}_l \right\} - u_n  \bar{c}_{L+1} \right).
\end{align*}

Since the analysis is valid for any link in $\mathcal{I}$, if $u_n < u \min_{l \in \mathcal{L}^+} \left\{ \bar{c}_l \right\} / \bar{c}_{L+1}$, then it is optimal to allocate a rate of $\bar{x}$ to all links in $\mathcal{L}$ and $\hat{x}$ to link $L+1$, which proves the theorem.
\end{proof}

It is interesting to note that Theorem \ref{condition_cc1} gives us the conditions such that the resource allocator \eqref{congestion_controller} can be used to test if there are enough resources to fulfill an admission request without disturbing other links. Since the analysis was done for a static optimization problem, we will now use Lemma \ref{optimality_result} to show how the choice of $u$ and $u_n$ influence the mean assigned rate over the actual network problem, which is dynamic and stochastic in nature.

\begin{theorem}
\label{thm_lower_bound_flows}
	If the utility functions are given by \eqref{utility_function_L}, \eqref{utility_function_L+}, and assuming the old system \eqref{opt_problem_old} can assign a flow rate of $\bar{x}$ to all links in $\mathcal{L}$, and if
\begin{equation*}
	u_n < u \min_{l \in \mathcal{L}^+} \left\{ \bar{c}_l \right\} / \bar{c}_{L+1},
\end{equation*}
then for $l \in \mathcal{L}$
\begin{equation}
\label{lower_bound_flows1}
	\lim_{T \rightarrow \infty} E\left[ \frac{1}{T} \sum_{t=1}^T  x_l(t) \right] \geq \bar{x} - \frac{B_6 \epsilon}{u} - \frac{u_n}{u} ( \bar{x} - \hat{x} ),
\end{equation}
and for $l = L+1$
\begin{equation}
\label{lower_bound_flows2}
	\lim_{T \rightarrow \infty} E\left[ \frac{1}{T} \sum_{t=1}^T  x_{L+1}(t) \right] \geq \hat{x} - \frac{B_6 \epsilon}{u_n}
\end{equation}
for some $B_6>0$, where $\mathbf{x}(t)$ is a solution to \eqref{congestion_controller}, and $\hat{x} \leq \bar{x}$ is the maximum rate that can be assigned to link $L+1$ that allows to assign $\bar{x}$ to all other links.
$\hfill \diamond$
\end{theorem}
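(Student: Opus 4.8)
The plan is to push the static conclusion of Theorem~\ref{condition_cc1} through the asymptotic optimality guarantee of Lemma~\ref{optimality_result}, applied now to the new system \eqref{opt_problem_new} over the set $\mathcal{L}^+$ rather than over $\mathcal{L}$. First I would record that, under the hypothesis $u_n < u\min_{l\in\mathcal{L}^+}\{\bar{c}_l\}/\bar{c}_{L+1}$, Theorem~\ref{condition_cc1} identifies an optimal solution of \eqref{opt_problem_new} with $x_l^*=\bar{x}$ for every $l\in\mathcal{L}$ and $x_{L+1}^*=\hat{x}$; since the utilities are the saturated linear functions \eqref{utility_function_L}, \eqref{utility_function_L+}, this gives $U_l(x_l^*)=u\bar{x}$ and $U_{L+1}(x_{L+1}^*)=u_n\hat{x}$. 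These saturated utilities are concave, so Lemma~\ref{optimality_result} applies; writing $y_l:=\lim_{T\to\infty}E[\tfrac{1}{T}\sum_{t=1}^T x_l(t)]$ for the long-run average rate of link $l$, it produces the single aggregate bound
\[
\sum_{l\in\mathcal{L}}\big[u\bar{x}-U_l(y_l)\big]+\big[u_n\hat{x}-U_{L+1}(y_{L+1})\big]\le B_6\epsilon
\]
for some $B_6>0$.

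The structural point I would exploit is that every summand indexed by $l\in\mathcal{L}$ is nonnegative, because $\bar{x}$ maximizes the saturated utility so $U_l(\cdot)\le u\bar{x}$. The only term that can be negative is the one for link $L+1$, which I would bound below using its saturation level, $u_n\hat{x}-U_{L+1}(y_{L+1})\ge u_n\hat{x}-u_n\bar{x}=-u_n(\bar{x}-\hat{x})$. Isolating a fixed $l_0\in\mathcal{L}$ in the aggregate bound, discarding the remaining nonnegative $\mathcal{L}$-terms, and moving the $(L+1)$-term across yields $u\bar{x}-U_{l_0}(y_{l_0})\le B_6\epsilon+u_n(\bar{x}-\hat{x})$. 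If $y_{l_0}\le\bar{x}$ then $U_{l_0}(y_{l_0})=u\,y_{l_0}$, and dividing by $u$ gives exactly \eqref{lower_bound_flows1}; if instead $y_{l_0}>\bar{x}$ the bound \eqref{lower_bound_flows1} holds trivially since its right-hand side lies below $\bar{x}$. For link $L+1$ I would simply drop all the nonnegative $\mathcal{L}$-terms to obtain $u_n\hat{x}-U_{L+1}(y_{L+1})\le B_6\epsilon$, and the same linear-versus-trivial case split (using $U_{L+1}(y)=u_n y$ for $y\le\bar{x}$ together with $\hat{x}\le\bar{x}$) delivers \eqref{lower_bound_flows2}.

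The main obstacle is the $(L+1)$-term, whose sign is not a priori controlled: because the achieved average rate $y_{L+1}$ may exceed $\hat{x}$, the quantity $u_n\hat{x}-U_{L+1}(y_{L+1})$ can be negative and thereby loosen the bound for the links of $\mathcal{L}$. Bounding it by the saturation value $u_n\bar{x}$ is precisely what introduces the correction term $\tfrac{u_n}{u}(\bar{x}-\hat{x})$ in \eqref{lower_bound_flows1}, and it explains why no such term appears in \eqref{lower_bound_flows2}. The only additional care needed is to permit $y_l>\bar{x}$ — which can occur because the resource allocator \eqref{congestion_controller} is unconstrained above $\bar{x}$ when a queue empties — but this case is dispatched trivially, since both target rates satisfy $\hat{x}\le\bar{x}$ and exceeding the target only strengthens the desired lower bounds.
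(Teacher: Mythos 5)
Your proposal is correct and follows essentially the same route as the paper: apply Lemma \ref{optimality_result} over $\mathcal{L}^+$ with the optimizer $(\{\bar{x}\}_{l \in \mathcal{L}}, \hat{x})$ supplied by Theorem \ref{condition_cc1}, obtain the single aggregate bound, and then isolate each link's term by discarding the other nonnegative terms and lower-bounding the $(L+1)$ term via saturation, which is exactly where the correction $\tfrac{u_n}{u}(\bar{x}-\hat{x})$ arises in both arguments. The only cosmetic difference is that the paper enforces $x_l(t) \leq \bar{x}$ by restricting the allocator \eqref{congestion_controller} to $[0,\bar{x}]$ so that all utilities become linear in the time-averages, whereas you keep the saturated utilities and dispatch the case of an average rate exceeding $\bar{x}$ with a trivial case split; both devices resolve the same technical point.
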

\begin{proof}
Before we proceed, if the utility functions are given by \eqref{utility_function_L} and \eqref{utility_function_L+}, we note that \eqref{congestion_controller} can be rewritten for all $l \in \mathcal{L}^+$ as
\begin{equation*}
	x_l(t) \in \mathop{\arg \max}_{ 0 \leq x_l \leq \bar{x} } \frac{1}{\epsilon} U_l(x_l) - q_l(t) x_l.
\end{equation*}
Thus, for all $t$ we have that $x_l(t) \leq \bar{x}$. Also, Theorem \ref{condition_cc1} tells us that $\mathbf{x}^* =  \mu^* = (\{\bar{x}\}_{l \in \mathcal{L}}, \hat{x})$, where $(\mu^*, \mathbf{x}^*)$ is a solution to \eqref{opt_problem_new}.

Note that Lemma \ref{optimality_result} is valid for any $\mathcal{L} \in \mathcal{D}$, so it is also valid for $\mathcal{L}^+$. Thus, rewriting \eqref{optimality_equation} for $\mathcal{L}^+$, and using \eqref{utility_function_L}, \eqref{utility_function_L+}, Theorem \ref{condition_cc1}, and the fact that for all $t$ and $l$, $x_l(t) \leq \bar{x}$, we get
	\begin{align*}
		& \lim_{T \rightarrow \infty} \sum_{l \in \mathcal{L}^+} \left\{ U_l( x_l^* ) - U_l\left( E\left[ \frac{1}{T} \sum_{t=1}^T  x_l(t) \right] \right) \right\} \\
		& = \lim_{T \rightarrow \infty} \left( \sum_{l \in \mathcal{L}} \left\{ u \bar{x} - u E\left[ \frac{1}{T} \sum_{t=1}^T  x_l(t) \right] \right\} \right. \\
	& \left. \qquad {} + u_n \hat{x} - u_n E\left[ \frac{1}{T} \sum_{t=1}^T  x_{L+1}(t) \right] \right) \\
		& = u \bar{x} L + u_n \hat{x} - u_n \tilde{x}_{L+1} - u \sum_{l \in \mathcal{L}} \tilde{x}_l \\
		& \leq B_6 \epsilon,
	\end{align*}
where $B_6>0$ is a constant that is similarly found as $B_1$ for the case that  Lemma \ref{optimality_result} is rewritten for the set $\mathcal{L}^+$, and where we define
	\begin{equation*}
		\tilde{x}_l = \lim_{T \rightarrow \infty} E\left[ \frac{1}{T} \sum_{t=1}^T  x_l(t) \right] \text{ for all } l \in \mathcal{L}^+.
	\end{equation*}
Hence, we get the following inequality
\begin{equation}
\label{objective_bound}
	u \bar{x} L + u_n \hat{x} - u_n \tilde{x}_{L+1} - u \sum_{l \in \mathcal{L}} \tilde{x}_l \leq B_6 \epsilon.
\end{equation}

	Consider link $L+1$. Since $x_l(t) \leq \bar{x}$ for all $t$ and $l \in \mathcal{L}^+$, note that $\tilde{x}_l \leq \bar{x}$. Then
\begin{align*}
	\tilde{x}_{L+1} & \geq \frac{u}{u_n} \bar{x} L + \hat{x} - \frac{u}{u_n} \sum_{l \in \mathcal{L}} \tilde{x}_l - \frac{B_6 \epsilon}{u_n} \\
	& \geq \hat{x} - \frac{B_6 \epsilon}{u_n}.
\end{align*}
Now consider link $l \in \mathcal{L}$. From the fact that $\tilde{x}_l \leq \bar{x}$, we have that
\begin{align*}
	\tilde{x}_l & \geq \bar{x} L - \sum_{i \in \mathcal{L} \setminus \{ l \} } \tilde{x}_i + \frac{u_n}{u} \hat{x} - \frac{u_n}{u} \tilde{x}_{L+1} - \frac{B_6 \epsilon}{u} \\
	& \geq \bar{x} - \frac{B_6 \epsilon}{u} - \frac{u_n}{u} ( \bar{x} - \hat{x} ),
\end{align*}
which completes the proof.
\end{proof}

Theorem \ref{thm_lower_bound_flows} gives us a lower bound on the mean assigned rates, assuming that only one link will be responsible for the loss in performance of the stochastic system compared to the static optimization problem. For fixed $\epsilon$, equation \eqref{lower_bound_flows1} suggests that $u$ should be large to guarantee that the links in the set $\mathcal{L}$ get a mean assigned rate close to $\bar{x}$. Although \eqref{lower_bound_flows2} also suggests that $u_n$ should be large to make sure we accurately determine the maximum mean rate that we can assign to link $L+1$, if our goal is to disturb the mean assigned rates for links in $\mathcal{L}$ as little as possible, the ratio $u_n/u$ should be as small as possible, as we highlight in the following corollary.

\begin{corollary}
	If the utility functions are given by \eqref{utility_function_L}, \eqref{utility_function_L+}, and assuming the old system \eqref{opt_problem_old} can assign a flow rate of $\bar{x}$ to all links in $\mathcal{L}$, and if
\begin{equation*}
	u_n < u \min_{l \in \mathcal{L}^+} \left\{ \bar{c}_l \right\} / \bar{c}_{L+1},
\end{equation*}
then
\begin{equation*}
	\bar{x} L - \sum_{l \in \mathcal{L}} \lim_{T \rightarrow \infty} E\left[ \frac{1}{T} \sum_{t=1}^T  x_l(t) \right] \leq \frac{B_6 \epsilon}{u} + \frac{u_n}{u} \left( \tilde{x}_{L+1} - \hat{x} \right)
\end{equation*}
for some $B_6>0$.
$\hfill \diamond$
\end{corollary}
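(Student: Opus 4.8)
The plan is to obtain the corollary directly from inequality \eqref{objective_bound}, which was already established in the proof of Theorem \ref{thm_lower_bound_flows}; no fresh optimization or stochastic analysis is required, only an algebraic rearrangement. Recall that under the stated hypotheses Theorem \ref{condition_cc1} gives $\mathbf{x}^* = \mu^* = (\{\bar{x}\}_{l \in \mathcal{L}}, \hat{x})$ as a solution to \eqref{opt_problem_new}, and applying Lemma \ref{optimality_result} over the set $\mathcal{L}^+$ (using the utility functions \eqref{utility_function_L}, \eqref{utility_function_L+} together with $x_l(t) \leq \bar{x}$ for all $t$) yielded
\begin{equation*}
	u \bar{x} L + u_n \hat{x} - u_n \tilde{x}_{L+1} - u \sum_{l \in \mathcal{L}} \tilde{x}_l \leq B_6 \epsilon,
\end{equation*}
where $\tilde{x}_l = \lim_{T \rightarrow \infty} E\left[ \frac{1}{T} \sum_{t=1}^T x_l(t) \right]$ for $l \in \mathcal{L}^+$.

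First I would regroup the left-hand side according to the set to which each link belongs, writing it as $u \bigl( \bar{x} L - \sum_{l \in \mathcal{L}} \tilde{x}_l \bigr) + u_n ( \hat{x} - \tilde{x}_{L+1} )$. Then I would move the term $u_n ( \hat{x} - \tilde{x}_{L+1} )$ to the right-hand side, obtaining
\begin{equation*}
	u \left( \bar{x} L - \sum_{l \in \mathcal{L}} \tilde{x}_l \right) \leq B_6 \epsilon + u_n ( \tilde{x}_{L+1} - \hat{x} ),
\end{equation*}
and finally divide through by $u > 0$ to recover the claimed bound.

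The only point worth flagging is that, unlike the per-link estimate \eqref{lower_bound_flows1} in Theorem \ref{thm_lower_bound_flows}, this aggregate statement does not require the cruder step of discarding the remaining links via $\tilde{x}_i \leq \bar{x}$; it drops out of \eqref{objective_bound} verbatim. Consequently there is no genuine obstacle here beyond the bookkeeping of isolating the correct term, and the interpretation is the intended one: the total rate deficit suffered by the existing flows in $\mathcal{L}$ is controlled by the accuracy parameter $\epsilon/u$ together with the ratio $u_n/u$ scaled by the gap $\tilde{x}_{L+1} - \hat{x}$ between the realized and target rates of the new link, so that choosing $u_n/u$ small keeps the disturbance to $\mathcal{L}$ negligible.
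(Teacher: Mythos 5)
Your proposal is correct and follows exactly the paper's own route: the paper's proof consists of the single line ``we get the desired result by rewriting \eqref{objective_bound},'' and your regrouping of that inequality into $u\bigl( \bar{x} L - \sum_{l \in \mathcal{L}} \tilde{x}_l \bigr) + u_n ( \hat{x} - \tilde{x}_{L+1} ) \leq B_6 \epsilon$ followed by division by $u>0$ is precisely the omitted bookkeeping. You have simply made explicit the algebra the paper leaves implicit, including the correct observation that no bound of the form $\tilde{x}_i \leq \bar{x}$ is needed here.
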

\begin{proof}
	We get the desired result by rewriting \eqref{objective_bound}.
\end{proof}

%
%
\section{Capacity Region}
\label{capacity_region}

To design a suitable admission controller, we must prove that we have properly defined the capacity region. Hence, in Lemma \ref{instability} we will first show that if we allocate mean rates that are not in the capacity region, then there is no scheduling algorithm that can keep the queues stable. Second, in Lemma \ref{stability} we will show that if we allocate rates that are an interior point of the capacity region, there is an algorithm that can keep the queues stable.

\begin{lemma}
\label{instability}
	If $\mathbf{x} \notin \mathcal{C}(\mathcal{L})$, then no scheduling algorithm can keep the queues stable when the mean assigned flow rates are given by $\mathbf{x}$.
$\hfill \diamond$
\end{lemma}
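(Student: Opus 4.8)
The plan is to establish the standard converse: no matter what information a scheduling algorithm uses, the long-run average service rate it realizes must lie in $\mathcal{C}(\mathcal{L})$, so an arrival-rate vector outside this set forces the queues to grow without bound. I would fix an arbitrary scheduling algorithm together with any arrival process having mean rate $x_l$ per slot at each link, and derive a contradiction with stability in the sense of Lemma \ref{queue_stability}.

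First I would drop the projection in the queue recursion. Since $[\,\cdot\,]^+\geq(\cdot)$, the update $q_l(t+1)=[q_l(t)+a_l(t)-d_l(t)]^+$ gives $q_l(t+1)\geq q_l(t)+a_l(t)-d_l(t)$, and telescoping over $t=1,\dots,T$ yields $q_l(T+1)\geq q_l(1)+\sum_{t=1}^T a_l(t)-\sum_{t=1}^T d_l(t)$. Taking expectations and using $E[a_l(t)]=x_l$ leaves only the departure term. Here is the key step: with $d_l(t)=c_l(t)s_l(t)$ from \eqref{def_dl}, I would condition on the history up to the start of slot $t$. Because the schedule is chosen without channel estimation and $c_l(t)$ is independent of the past with mean $\bar c_l$, I obtain $E[d_l(t)]=\bar c_l\,E[s_l(t)]$ for every $l$ and $t$, regardless of the algorithm.

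Next I would reinterpret the time-averaged schedule as a scheduling policy. Setting $\bar P_T(\mathbf s)=\frac1T\sum_{t=1}^T P(\mathbf s(t)=\mathbf s)$ for $\mathbf s\in\mathcal S(\mathcal L)$ defines a genuine probability distribution over feasible schedules, so the vector $\mu_T$ with components $(\mu_T)_l=\bar c_l\sum_{\mathbf s}s_l\bar P_T(\mathbf s)=\frac1T\sum_{t=1}^T E[d_l(t)]$ satisfies \eqref{simplified_mu_l} with equality, hence $\mu_T\in\mathcal C(\mathcal L)$ by Definition \ref{def_capacity}. Combining the two steps gives $E[q_l(T+1)]\geq E[q_l(1)]+T\big(x_l-(\mu_T)_l\big)$. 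Since $\mathbf x\notin\mathcal C(\mathcal L)$ and $\mathcal C(\mathcal L)$ is closed and convex, I would invoke the separating hyperplane theorem to get $\mathbf w=\{w_l\}$ and $\eta>0$ with $\sum_l w_l x_l\geq\sum_l w_l\mu_l+\eta$ for all $\mu\in\mathcal C(\mathcal L)$; because $\mathcal C(\mathcal L)$ is downward-comprehensive in the nonnegative orthant (a link can always be served at a lower rate via null packets, as encoded by the inequality in \eqref{simplified_mu_l}), the normal $\mathbf w$ may be taken nonnegative. Multiplying the queue bound by $w_l\geq0$, summing, and using $\mu_T\in\mathcal C(\mathcal L)$ yields $\sum_l w_l E[q_l(T+1)]\geq\sum_l w_l E[q_l(1)]+T\eta$. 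As the $w_l$ are bounded, this forces $E\big[\sum_l q_l(T+1)\big]\to\infty$, contradicting stability.

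The main obstacle I anticipate is the departure identity $E[d_l(t)]=\bar c_l E[s_l(t)]$ for an \emph{arbitrary}, history-dependent scheduler: the argument must not presuppose a fixed stationary policy, which is exactly why I route everything through conditional expectations on the pre-slot history and through the time-averaged empirical distribution $\bar P_T$ rather than an a priori policy. A secondary point needing care is the claim that the separating normal can be chosen nonnegative, since that is what lets the separation survive after weighting the nonnegative queue lengths.
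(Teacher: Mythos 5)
Your proposal is correct and follows essentially the same route as the paper's proof: both hinge on strictly separating $\mathbf{x}$ from the convex set $\mathcal{C}(\mathcal{L})$ with a \emph{nonnegative} normal (justified, as you note, by the downward-comprehensiveness encoded in the inequality \eqref{simplified_mu_l}), on the observation that the per-slot expected service vector of any scheduler lies in $\mathcal{C}(\mathcal{L})$, and on dropping the projection $[\cdot]^+$ to get linear growth of a weighted queue sum. The only cosmetic difference is packaging: the paper runs a per-slot drift argument on the linear Lyapunov function $V(\mathbf{q})=\mathbf{b}^T\mathbf{q}$, whereas you telescope the queue recursion and time-average the schedule distribution; these are the same computation.
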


The proof can be found in \begin{ARXIV} Appendix \ref{proof_instability}\end{ARXIV}\begin{INFOCOM} \cite{Jaramillo13}\end{INFOCOM}, and follows a technique similar to \cite{Eryilmaz05b}.

\begin{lemma}
\label{stability}
	 If $\mathbf{x} \in \mathcal{C}(\mathcal{L})/(1+\Delta)$ for some $\Delta > 0$ such that $x_l > 0$ for all $l \in \mathcal{L}$, then there exists a scheduler that keeps the queues stable when the mean assigned flow rates are $\mathbf{x}$.
$\hfill \diamond$
\end{lemma}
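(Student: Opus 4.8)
The plan is to exhibit a \emph{stationary randomized} scheduler and show that it induces a negative Lyapunov drift outside a bounded set, from which positive recurrence and a bounded mean queue length follow by the Foster--Lyapunov criterion. Since $\mathbf{x} \in \mathcal{C}(\mathcal{L})/(1+\Delta)$, we have $(1+\Delta)\mathbf{x} \in \mathcal{C}(\mathcal{L})$, so by Definition \ref{def_capacity} (and the fact, noted after Definitions \ref{def_av_rates} and \ref{def_capacity}, that $\mathcal{C}(\mathcal{L})$ is the convex hull of $\Gamma(\mathcal{L})$) there is a scheduling policy $P_{\mathcal{L}}(\mathbf{s})$ over $\mathcal{S}(\mathcal{L})$ with $\sum_{\mathbf{s}} \bar{c}_l s_l P_{\mathcal{L}}(\mathbf{s}) \geq (1+\Delta)x_l$ for every $l \in \mathcal{L}$. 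I take this $P_{\mathcal{L}}$ as the scheduler: in each slot draw $\mathbf{s}(t)$ independently from $P_{\mathcal{L}}$, ignoring the queues. Because $c_l(t)$ is independent of the schedule with mean $\bar{c}_l$, the expected service satisfies $E[d_l(t) \mid \mathbf{q}(t)] = \sum_{\mathbf{s}} \bar{c}_l s_l P_{\mathcal{L}}(\mathbf{s}) \geq (1+\Delta)x_l$, which strictly exceeds the mean arrival rate $x_l > 0$.

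Next I would take the quadratic Lyapunov function $V(\mathbf{q}) = \sum_{l \in \mathcal{L}} q_l^2$ and bound its one-step drift. Dropping the projection $[\cdot]^+$ only increases the square, so $q_l(t+1)^2 - q_l(t)^2 \leq (a_l(t) - d_l(t))^2 + 2 q_l(t)(a_l(t) - d_l(t))$. Taking the conditional expectation given $\mathbf{q}(t)$ and summing over $l \in \mathcal{L}$, the second-moment terms are bounded by a constant $B$ (using $d_l(t) \in \{0,1\}$ and the finite arrival variance $\sigma^2$), while the cross term contributes $2\sum_l q_l(t)(x_l - E[d_l(t)\mid \mathbf{q}(t)]) \leq -2\Delta \sum_l x_l q_l(t)$. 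This yields
\begin{equation*}
E[V(\mathbf{q}(t+1)) - V(\mathbf{q}(t)) \mid \mathbf{q}(t)] \leq B - 2\Delta \sum_{l \in \mathcal{L}} x_l q_l(t).
\end{equation*}
Since $x_l > 0$ for all $l$, writing $x_{\min} = \min_{l \in \mathcal{L}} x_l > 0$ bounds the drift by $B - 2\Delta x_{\min} \sum_l q_l(t)$, which is strictly negative whenever $\sum_l q_l(t) > B/(2\Delta x_{\min})$.

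Finally I would invoke the Foster--Lyapunov theorem: the uniform negative drift outside the finite set $\{\mathbf{q} : \sum_l q_l \leq B/(2\Delta x_{\min})\}$, together with the irreducibility and aperiodicity already established for the chain $\mathbf{q}(t)$, gives positive recurrence, and telescoping and averaging the drift inequality over $t$ bounds $\lim_{t\to\infty} E[\sum_l q_l(t)]$ by $B/(2\Delta x_{\min})$, which is the desired stability. The argument is essentially routine max-weight/Lyapunov machinery; the only step requiring genuine care is the first one, namely extracting from the convex-hull characterization a randomized policy that beats $\mathbf{x}$ by the full multiplicative margin $(1+\Delta)$ at \emph{every} link simultaneously, since it is exactly this uniform margin that converts the cross term into a strictly negative per-queue drift. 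One could instead run the max-weight scheduler \eqref{scheduler} restricted to $\mathcal{S}(\mathcal{L})$ and compare its weighted rate against this same randomized policy, but since only existence of a stabilizing scheduler is claimed, the decoupled stationary policy is the shortest route.
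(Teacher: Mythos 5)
Your proposal is correct and follows essentially the same route as the paper's own proof: a stationary randomized scheduler realizing service rates $(1+\Delta)x_l$ (whose existence follows from $(1+\Delta)\mathbf{x} \in \mathcal{C}(\mathcal{L})$ and Definition \ref{def_capacity}), a quadratic Lyapunov function, and a drift bound of the form $\beta_1 - \Delta\min_l\{x_l\}\sum_l q_l$ yielding positive recurrence via Foster--Lyapunov. The only differences are cosmetic (the paper fixes the policy with equality rather than an inequality, and stops at positive recurrence rather than telescoping for an explicit mean-queue bound).
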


In \begin{ARXIV} Appendix \ref{proof_stability} \end{ARXIV} \begin{INFOCOM} \cite{Jaramillo13} \end{INFOCOM} the proof is presented. It follows a technique similar to the ideas presented in \cite{Neely05}.

%
%
\section{Conclusions}

We have considered the problem of distributed admission control without knowledge of the capacity region in single-hop wireless networks for flows that require a pre-specified bandwidth from the network. To achieve this goal, we presented a utility maximization framework that allowed us to develop a scheduler and a distributed resource allocator. By properly choosing the utility function used by the resource allocator, we have proved that existing flows can be served with a pre-specified bandwidth, while the link requesting admission can determine the largest mean flow rate that can be assigned that avoids interfering with the service to other nodes.

%
%
\begin{ARXIV}
\appendices
\section{Proof of Lemma \ref{queue_stability}}
\label{proof_queue_stability}

We will start by first proving some auxiliary lemmas.

\begin{lemma}
\label{optimization_equivalence}
	The following optimization problems have the same solution
	\begin{equation*}
		\max_{ \mathbf{s} \in \mathcal{S}(\mathcal{L}) } \sum_{l \in \mathcal{L}} q_l(t) \bar{c}_l s_l = \max_{ \mu \in \mathcal{C}(\mathcal{L}) } \sum_{l \in \mathcal{L}} q_l(t) \mu_l.
	\end{equation*}
$\hfill \diamond$
\end{lemma}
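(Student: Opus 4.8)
The plan is to read this as the elementary fact that a linear functional attains the same maximum over a set as over its convex hull. By Definition~\ref{def_av_rates}, every feasible schedule $\mathbf{s} \in \mathcal{S}(\mathcal{L})$ produces a vector $\{\bar{c}_l s_l\}_{l \in \mathcal{D}} \in \Gamma(\mathcal{L})$, so the left-hand side is exactly $\max_{\gamma \in \Gamma(\mathcal{L})} \sum_{l \in \mathcal{L}} q_l(t)\, \gamma_l$. Since the excerpt already establishes that $\mathcal{C}(\mathcal{L})$ is the convex hull of $\Gamma(\mathcal{L})$, it suffices to show that maximizing the linear objective $\sum_{l \in \mathcal{L}} q_l(t)\, \mu_l$ over $\mathcal{C}(\mathcal{L})$ returns the same optimal value as maximizing over $\Gamma(\mathcal{L})$ itself, and that an optimizer of the former can be taken in $\Gamma(\mathcal{L})$, i.e.\ realized by an actual schedule.

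First I would prove the easy direction: because $\Gamma(\mathcal{L}) \subseteq \mathcal{C}(\mathcal{L})$, maximizing over the smaller set can only give a smaller value, so the left-hand side is at most the right-hand side. For the reverse inequality I would take an optimizer $\mu^* \in \mathcal{C}(\mathcal{L})$ of the right-hand side and write it as a convex combination $\mu^* = \sum_k \lambda_k\, \gamma^{(k)}$ with $\gamma^{(k)} \in \Gamma(\mathcal{L})$, $\lambda_k \geq 0$, and $\sum_k \lambda_k = 1$. Using linearity of the objective and interchanging the two finite sums gives $\sum_{l} q_l(t)\, \mu^*_l = \sum_k \lambda_k \sum_l q_l(t)\, \gamma^{(k)}_l$, a convex average of the values $\sum_l q_l(t)\, \gamma^{(k)}_l$, which is therefore bounded above by $\max_{\gamma \in \Gamma(\mathcal{L})} \sum_l q_l(t)\, \gamma_l$, the left-hand side. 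Combining the two inequalities yields equality of the optimal values and shows the right-hand maximum is attained at some $\gamma^{(k)} \in \Gamma(\mathcal{L})$, hence at the vector generated by a schedule $\mathbf{s} \in \mathcal{S}(\mathcal{L})$.

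The one point needing care is that Definition~\ref{def_capacity} specifies $\mathcal{C}(\mathcal{L})$ through the inequality $\mu_l \leq \sum_{\mathbf{s}} \bar{c}_l s_l P_{\mathcal{L}}(\mathbf{s})$, so a priori $\mathcal{C}(\mathcal{L})$ also contains points lying \emph{below} the convex hull of $\Gamma(\mathcal{L})$. I would dispose of this by noting that the queue lengths satisfy $q_l(t) \geq 0$, so lowering any coordinate of $\mu$ can only decrease the objective; hence the right-hand maximum is never improved by such dominated points and is always attained on $\mathrm{conv}(\Gamma(\mathcal{L}))$, where the convex-combination argument applies. I do not expect any genuine obstacle: the statement is a direct consequence of convexity together with the non-negativity of $\mathbf{q}(t)$, and the only mild bookkeeping is keeping the inequality-based definition of $\mathcal{C}(\mathcal{L})$ consistent with the claimed equality.
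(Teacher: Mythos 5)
Your proposal is correct and takes essentially the same route as the paper: rewrite the left-hand side as a maximum over $\Gamma(\mathcal{L})$ via Definition~\ref{def_av_rates}, then use linearity of the objective to pass between $\Gamma(\mathcal{L})$ and its convex hull $\mathcal{C}(\mathcal{L})$. The only difference is one of rigor, not of method: you spell out the convex-combination bound and patch the fact that Definition~\ref{def_capacity} defines $\mathcal{C}(\mathcal{L})$ through inequalities (hence downward-closed) by invoking $q_l(t) \geq 0$, a technicality the paper silently absorbs by asserting $\mathcal{C}(\mathcal{L}) = \mathrm{conv}(\Gamma(\mathcal{L}))$.
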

\begin{proof}
	From Definition \ref{def_av_rates} we know that the following two optimization problems are equivalent
	\begin{equation*}
		\max_{ \mathbf{s} \in \mathcal{S}(\mathcal{L}) } \sum_{l \in \mathcal{L}} q_l(t) \bar{c}_l s_l = \max_{ \gamma \in  \Gamma(\mathcal{L}) } \sum_{l \in \mathcal{L}} q_l(t) \gamma_l.
	\end{equation*}

	Since the objective function is linear, the optimal value will not change if we perform the optimization problem over the convex hull of $\Gamma(\mathcal{L})$, which from Definition \ref{def_capacity} we know that it is $\mathcal{C}(\mathcal{L})$. Thus, the following two optimization problems have the same solution
	\begin{equation*}
		\max_{ \gamma \in  \Gamma(\mathcal{L}) } \sum_{l \in \mathcal{L}} q_l(t) \gamma_l = \max_{ \mu \in \mathcal{C}(\mathcal{L}) } \sum_{l \in \mathcal{L}} q_l(t) \mu_l,
	\end{equation*}
which proves the lemma.
\end{proof}

\begin{lemma}
\label{expected_drift}
	Consider the Lyapunov function $V(\mathbf{q}) = \frac{1}{2} \sum_{l \in \mathcal{L}} q_l^2$, and assume that $X_{max} \geq \max_{\mu \in \mathcal{C}(\mathcal{L})} \left\{ \max_{l \in \mathcal{L}} \{ \mu_l \} \right\}$. If there exists $\mu(\Delta) \in \mathcal{C}(\mathcal{L})/(1+\Delta)$ for some $\Delta > 0$ such that $\mu_l(\Delta) > 0$ for all $l \in \mathcal{L}$, then
\begin{align*}
	& E\left[ V(\mathbf{q}(t+1)) | \mathbf{q}(t) = \mathbf{q} \right] - V(\mathbf{q}) \\
	& \leq B_1 -  B_2 \sum_{l \in \mathcal{L}} q_l - \frac{1}{\epsilon} \sum_{l \in \mathcal{L}} \left[ U_l(\mu_l(\Delta)) - U_l(x_l(t)) \right]
\end{align*}
given $\epsilon > 0$, for some $B_1>0$, $B_2>0$, where $\mathcal{C}(\mathcal{L})/(1+\Delta)$ is a scaled version of $\mathcal{C}(\mathcal{L})$, and $\mathbf{x}(t)$ is a solution to \eqref{congestion_controller}.
$\hfill \diamond$
\end{lemma}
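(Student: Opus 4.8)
The plan is to carry out a standard Lyapunov drift computation, making the optimality of both the scheduler \eqref{scheduler} and the resource allocator \eqref{congestion_controller} interlock against the fixed reference point $\mu(\Delta)$. First I would bound the one-slot growth of $V$. Applying $([\alpha]^+)^2 \leq \alpha^2$ to the update $q_l(t+1) = [q_l(t) + a_l(t) - d_l(t)]^+$ gives, for each link,
\[
	\tfrac{1}{2} q_l(t+1)^2 - \tfrac{1}{2} q_l(t)^2 \leq \tfrac{1}{2}\left(a_l(t) - d_l(t)\right)^2 + q_l(t)\left(a_l(t) - d_l(t)\right).
\]
Summing over $l \in \mathcal{L}$ and taking the conditional expectation given $\mathbf{q}(t) = \mathbf{q}$, the quadratic terms are bounded by a constant $B_1$, since $a_l(t)$ has mean $x_l(t) \leq X_{max}$ and variance at most $\sigma^2$ while $d_l(t) \in \{0,1\}$. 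This reduces everything to the linear term $\sum_{l \in \mathcal{L}} q_l\, E[a_l(t) - d_l(t) \mid \mathbf{q}]$. Because $a_l(t)$ has mean $x_l(t)$ and $c_l(t)$ is independent of the schedule with mean $\bar{c}_l$, we have $E[a_l(t)\mid\mathbf{q}] = x_l(t)$ and $E[d_l(t)\mid\mathbf{q}] = \bar{c}_l s_l(t)$, so this term equals $\sum_{l \in \mathcal{L}} q_l x_l(t) - \sum_{l \in \mathcal{L}} q_l \bar{c}_l s_l(t)$.

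Next I would dispose of the service term using the scheduler. By Lemma \ref{optimization_equivalence}, $\sum_{l \in \mathcal{L}} q_l \bar{c}_l s_l(t) = \max_{\mu \in \mathcal{C}(\mathcal{L})} \sum_{l \in \mathcal{L}} q_l \mu_l$. Since $\mu(\Delta) \in \mathcal{C}(\mathcal{L})/(1+\Delta)$ means $(1+\Delta)\mu(\Delta) \in \mathcal{C}(\mathcal{L})$, this maximum is at least $(1+\Delta)\sum_{l \in \mathcal{L}} q_l \mu_l(\Delta)$, whence
\[
	\sum_{l \in \mathcal{L}} q_l x_l(t) - \sum_{l \in \mathcal{L}} q_l \bar{c}_l s_l(t) \leq \sum_{l \in \mathcal{L}} q_l\left(x_l(t) - \mu_l(\Delta)\right) - \Delta \sum_{l \in \mathcal{L}} q_l \mu_l(\Delta).
\]
The rightmost term is at most $-\Delta\left(\min_{l \in \mathcal{L}} \mu_l(\Delta)\right)\sum_{l \in \mathcal{L}} q_l =: -B_2 \sum_{l \in \mathcal{L}} q_l$, and $B_2 > 0$ precisely because the hypothesis guarantees $\mu_l(\Delta) > 0$ for every $l$.

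Finally I would convert the remaining term with the resource allocator. Note that $0 \leq \mu_l(\Delta) \leq X_{max}$ (as $\mu(\Delta) \in \mathcal{C}(\mathcal{L})$ and $X_{max}$ dominates the region), so $\mu_l(\Delta)$ is feasible in \eqref{congestion_controller}; the optimality of $x_l(t)$ then gives $\frac{1}{\epsilon}U_l(x_l(t)) - q_l x_l(t) \geq \frac{1}{\epsilon}U_l(\mu_l(\Delta)) - q_l \mu_l(\Delta)$, i.e. $q_l\left(x_l(t) - \mu_l(\Delta)\right) \leq -\frac{1}{\epsilon}\left[U_l(\mu_l(\Delta)) - U_l(x_l(t))\right]$. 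Summing over $l$ and collecting the constant $B_1$, the negative term $-B_2\sum_{l \in \mathcal{L}} q_l$, and this utility difference yields exactly the claimed drift bound.

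The main obstacle is getting the two optimizations to cooperate around the common reference $\mu(\Delta)$: the scheduler comparison lives on the capacity region (so Lemma \ref{optimization_equivalence} is needed to pass from schedules $\mathbf{s}$ to rates $\mu$), whereas the resource-allocator comparison is made pointwise per link. The $(1+\Delta)$ slack afforded by the scaled region $\mathcal{C}(\mathcal{L})/(1+\Delta)$ is what produces the strictly negative drift $-B_2 \sum_{l \in \mathcal{L}} q_l$ required for stability, and the positivity hypothesis $\mu_l(\Delta) > 0$ feeds directly into establishing $B_2 > 0$; keeping these dependencies straight is the only delicate part, as the rest is routine bounding.
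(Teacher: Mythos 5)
Your proposal is correct and follows essentially the same argument as the paper's proof: the quadratic Lyapunov expansion with $([\alpha]^+)^2 \leq \alpha^2$, the constant bound $B_1$ on the quadratic terms, Lemma \ref{optimization_equivalence} together with $(1+\Delta)\mu(\Delta) \in \mathcal{C}(\mathcal{L})$ to handle the scheduler term, the feasibility of $\mu_l(\Delta)$ in \eqref{congestion_controller} (via the $X_{max}$ assumption) to handle the allocator term, and $B_2 = \Delta \min_{l \in \mathcal{L}}\{\mu_l(\Delta)\} > 0$. The only cosmetic difference is that you apply the scheduler and allocator optimality bounds in two separate steps, whereas the paper adds and subtracts $\frac{1}{\epsilon}U_l(x_l(t))$ and invokes both at once; the resulting bound and constants are identical.
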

\begin{proof}
\begin{align}
	& E\left[ V(\mathbf{q}(t+1)) | \mathbf{q}(t) = \mathbf{q} \right] - V(\mathbf{q}) \notag \\
	& = E\left[ \left. \frac{1}{2} \sum_{l \in \mathcal{L}} q_l^2(t+1) \right\vert \mathbf{q}(t) = \mathbf{q} \right] - \frac{1}{2} \sum_{l \in \mathcal{L}} q_l^2 \notag \\
	& = E\left[ \frac{1}{2} \sum_{l \in \mathcal{L}} \left\{ \left[ q_l + a_l(t) - d_l(t) \right]^+ \right\}^2 \right] - \frac{1}{2} \sum_{l \in \mathcal{L}} q_l^2 \notag \\
	& \leq E\left[ \frac{1}{2} \sum_{l \in \mathcal{L}} \left[ q_l + a_l(t) - d_l(t) \right]^2 \right] - \frac{1}{2} \sum_{l \in \mathcal{L}} q_l^2 \notag \\
	& = E\left[ \sum_{l \in \mathcal{L}} q_l \left[ a_l(t) - d_l(t) \right] + \frac{1}{2} \sum_{l \in \mathcal{L}} \left[ a_l(t) - d_l(t) \right]^2 \right]  \notag \\
	& \leq E\left[ \sum_{l \in \mathcal{L}} q_l \left[ a_l(t) - d_l(t) \right] \right] + \frac{1}{2} \sum_{l \in \mathcal{L}} E\left[ a_l^2(t) + d_l^2(t) \right] \notag \\
	& \leq E\left[ \sum_{l \in \mathcal{L}} q_l \left[ a_l(t) - d_l(t) \right] \right] + B_1 \notag \\
	& = E\left[ \sum_{l \in \mathcal{L}} q_l \left[ x_l(t) -  \bar{c}_l s_l(t) \right] \right] + B_1 \label{mu_l} \\
	& = E\left[ \sum_{l \in \mathcal{L}} -\frac{1}{\epsilon} U_l(x_l(t)) + q_l x_l(t) + \frac{1}{\epsilon} U_l(x_l(t)) -  q_l \bar{c}_l s_l(t) \right] \notag \\
	& \qquad {} + B_1 \notag \\
	& \leq E\left[ \sum_{l \in \mathcal{L}} \left\{ -\frac{1}{\epsilon} U_l(\mu_l(\Delta)) + q_l \mu_l(\Delta) \right. \right. \notag \\
	& \qquad {} \left. \left. + \frac{1}{\epsilon} U_l(x_l(t)) -  q_l (1+\Delta) \mu_l(\Delta) \right\} \right] + B_1 \label{mu_delta} \\
	& = B_1 + \sum_{l \in \mathcal{L}} \left\{ -\frac{1}{\epsilon} U_l(\mu_l(\Delta)) + \frac{1}{\epsilon} U_l(x_l(t)) -  q_l \Delta \mu_l(\Delta) \right\} \notag \\
	& = B_1 - \sum_{l \in \mathcal{L}} q_l \Delta \mu_l(\Delta) - \frac{1}{\epsilon} \sum_{l \in \mathcal{L}} \left[ U_l(\mu_l(\Delta)) - U_l(x_l(t)) \right] \notag \\
	& \leq B_1 - B_2 \sum_{l \in \mathcal{L}} q_l - \frac{1}{\epsilon} \sum_{l \in \mathcal{L}} \left[ U_l(\mu_l(\Delta)) - U_l(x_l(t)) \right] \notag 
\end{align}
where $B_1 =L \left( X_{max}^2 + \sigma^2 + 1 \right)/2$, \eqref{mu_l} is a consequence of \eqref{def_dl} and the definition of $\mu_l(t)$ in \eqref{def_mu_l} and \eqref{simplified_mu_l}, \eqref{mu_delta} follows from \eqref{scheduler}, \eqref{congestion_controller}, Lemma \ref{optimization_equivalence}, and the fact that  $(1+\Delta) \mu(\Delta) \in \mathcal{C}(\mathcal{L})$, and $B_2 = \Delta \min_{l \in \mathcal{L} } \left\{ \mu_l(\Delta) \right\}$.
\end{proof}

Since the last term in the right-hand side of the inequality can be upper-bounded, we have that the expected drift is negative but for a finite set of values of $\mathbf{q}(t)$, and therefore the Markov chain $\mathbf{q}(t)$ is positive recurrent. As a consequence of this fact we can now prove Lemma \ref{queue_stability}.

\begin{proof}[Proof of Lemma \ref{queue_stability}]
\begin{align}
	& E\left[ V(\mathbf{q}(t+1)) | \mathbf{q}(t) = \mathbf{q} \right] - V(\mathbf{q}) \notag \\
	& \leq B_1 - B_2 \sum_{l \in \mathcal{L}} q_l + \frac{1}{\epsilon} \sum_{l \in \mathcal{L}} \left[ U_l(x_l(t)) - U_l(\mu_l(\Delta)) \right] \notag \\
	& \leq B_1 - B_2 \sum_{l \in \mathcal{L}} q_l + \frac{1}{\epsilon} \sum_{l \in \mathcal{L}} \left[ \left| U_l(x_l(t)) \right| + \left| U_l(\mu_l(\Delta)) \right| \right] \notag \\
	& \leq B_1 - B_2 \sum_{l \in \mathcal{L}} q_l + \frac{1}{\epsilon} \sum_{l \in \mathcal{L}} 2 \max_{ 0 \leq x_l \leq X_{max}} \left| U_l(x_l) \right|. \notag
\end{align}
This expected drift is conditioned on the event $\{ \mathbf{q}(t) = \mathbf{q} \}$, so averaging over possible values of $\mathbf{q}(t)$ we get
\begin{align}
	& E\left[ V(\mathbf{q}(t+1)) \right] - E\left[ V(\mathbf{q}(t)) \right] \notag \\
	& \leq B_1 -  B_2 E\left[ \sum_{l \in \mathcal{L}} q_l(t) \right] + \frac{1}{\epsilon} \sum_{l \in \mathcal{L}} 2 \max_{ 0 \leq x_l \leq X_{max}} \left| U_l(x_l) \right|. \notag
\end{align}
Since the Markov chain $\mathbf{q}(t)$ is positive recurrent we know that for any initial distribution of $\mathbf{q}(1)$ the distribution when $t \rightarrow \infty$ is unique and equal to the equilibrium probability distribution. Thus
\begin{equation*}
	0 \leq B_1 - B_2 \lim_{t \rightarrow \infty} E\left[ \sum_{l \in \mathcal{L}} q_l(t) \right] + \frac{1}{\epsilon} \sum_{l \in \mathcal{L}} 2 \max_{ 0 \leq x_l \leq X_{max}} \left| U_l(x_l) \right|.
\end{equation*}
Reordering terms we get
\begin{align*}
	& \lim_{t \rightarrow \infty} E\left[ \sum_{l \in \mathcal{L}} q_l(t) \right] \\
	& \leq \frac{B_1}{B_2} + \frac{1}{\epsilon} \frac{\sum_{l \in \mathcal{L}} 2 \max_{ 0 \leq x_l \leq X_{max}} \left| U_l(x_l) \right|}{B_2} \\
	& = B_3 + \frac{1}{\epsilon} B_4.
\end{align*}
\end{proof}

\section{Proof of Lemma \ref{optimality_result}}
\label{proof_optimality_result}

To prove that our scheduler and resource allocator optimally solve the stochastic network problem, we will use a slightly weaker version of Lemma \ref{expected_drift}.
\begin{lemma}
\label{weak_expected_drift}
	Consider the Lyapunov function $V(\mathbf{q}) = \frac{1}{2} \sum_{l \in \mathcal{L}} q_l^2$, and assume that $X_{max} \geq \max_{\mu \in \mathcal{C}(\mathcal{L})} \left\{ \max_{l \in \mathcal{L}} \{ \mu_l \} \right\}$. Then
\begin{align*}
	& E\left[ V(\mathbf{q}(t+1)) | \mathbf{q}(t) = \mathbf{q} \right] - V(\mathbf{q}) \\
	& \leq B_1 -  B_5 \sum_{l \in \mathcal{L}} q_l - \frac{1}{\epsilon} \sum_{l \in \mathcal{L}} \left[ U_l( x_l^* ) - U_l(x_l(t)) \right]
\end{align*}
given $\epsilon > 0$, for some $B_1>0$, $B_5 \geq 0$, where $\mathbf{x}^*$ a solution to \eqref{opt_problem_old}, and $\mathbf{x}(t)$ is a solution to \eqref{congestion_controller}.
$\hfill \diamond$
\end{lemma}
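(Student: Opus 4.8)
The plan is to mirror the derivation of Lemma~\ref{expected_drift} almost verbatim, diverging only in the choice of comparator points. First I would reproduce the chain of inequalities from the start of that proof up through the step \eqref{mu_l}, since nothing there relies on the existence of an interior point $\mu(\Delta)$: drop the projection via $\{[\cdot]^+\}^2 \le (\cdot)^2$, expand the square, absorb the second-order terms into $B_1 = L(X_{max}^2 + \sigma^2 + 1)/2$, and substitute \eqref{def_dl} together with the definition of $\mu_l$ to replace $E[a_l(t)-d_l(t)]$ by $x_l(t)-\bar{c}_l s_l(t)$. This leaves the conditional drift bounded by $E\!\left[\sum_{l} q_l(x_l(t) - \bar{c}_l s_l(t))\right] + B_1$.

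Next I would add and subtract $\frac{1}{\epsilon}U_l(x_l(t))$ exactly as before, and then apply the two optimality principles with the \emph{static} optimizer $(\mu^*, \mathbf{x}^*)$ of \eqref{opt_problem_old} in place of $\mu(\Delta)$ and $(1+\Delta)\mu(\Delta)$. Because $0 \le x_l^* \le \mu_l^* \le X_{max}$ (the last inequality by the hypothesis on $X_{max}$), the point $x_l^*$ is feasible for the resource allocator \eqref{congestion_controller}, so $-\frac{1}{\epsilon}U_l(x_l(t)) + q_l x_l(t) \le -\frac{1}{\epsilon}U_l(x_l^*) + q_l x_l^*$. Likewise, since $\mu^* \in \mathcal{C}(\mathcal{L})$ directly, with no $(1+\Delta)$ scaling needed, the scheduler \eqref{scheduler} together with Lemma~\ref{optimization_equivalence} gives $\sum_l q_l \bar{c}_l s_l(t) \ge \sum_l q_l \mu_l^*$. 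Substituting both bounds and regrouping yields
\[
E[V(\mathbf{q}(t+1))\mid\mathbf{q}(t)=\mathbf{q}] - V(\mathbf{q}) \le B_1 - \frac{1}{\epsilon}\sum_{l\in\mathcal{L}}\left[U_l(x_l^*) - U_l(x_l(t))\right] + \sum_{l\in\mathcal{L}} q_l(x_l^* - \mu_l^*).
\]

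The final step, and the only genuinely new point, is to control the residual $\sum_l q_l(x_l^* - \mu_l^*)$. The feasibility constraint $x_l^* \le \mu_l^*$ of \eqref{opt_problem_old} together with $q_l \ge 0$ makes every summand nonpositive, so I would set $B_5 = \min_{l\in\mathcal{L}}(\mu_l^* - x_l^*) \ge 0$ and bound $\sum_l q_l(x_l^* - \mu_l^*) \le -B_5 \sum_l q_l$, which delivers the claimed inequality. I expect the main obstacle to be conceptual rather than computational: whereas Lemma~\ref{expected_drift} extracts a strictly positive coefficient $B_2 > 0$ from the strict slack of the scaled interior point $(1+\Delta)\mu(\Delta)$, the static optimum $\mathbf{x}^*$ carries no guaranteed slack over $\mu^*$ — indeed $x_l^* = \mu_l^*$ is typical at optimality whenever the utility is increasing — so the best one can assert is $B_5 \ge 0$. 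This is precisely why the statement is a \emph{weaker} drift bound: it suffices for the asymptotic optimality conclusion of Lemma~\ref{optimality_result}, where the $B_5\sum_l q_l$ term is simply discarded, but it cannot by itself establish positive recurrence.
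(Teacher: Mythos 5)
Your proof is correct and is exactly the argument the paper intends: the paper omits this proof as ``almost identical'' to that of Lemma~\ref{expected_drift}, and your derivation fills it in precisely by substituting the static optimizer $(\mu^*, \mathbf{x}^*)$ of \eqref{opt_problem_old} for $\mu(\Delta)$ and $(1+\Delta)\mu(\Delta)$ in the two optimality comparisons (resource allocator and scheduler, the latter via Lemma~\ref{optimization_equivalence}). Your closing observation --- that the feasibility constraint $x_l^* \leq \mu_l^*$ yields only $B_5 \geq 0$ rather than the strictly positive $B_2$ extracted from the interior point $\mu(\Delta)$, which is why this weaker bound suffices for Lemma~\ref{optimality_result} but not for positive recurrence --- correctly identifies the one substantive difference between the two lemmas.
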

The proof is almost identical to the proof for Lemma \ref{expected_drift}, which can be found in Appendix \ref{proof_queue_stability}, and it is therefore omitted.

\begin{proof}[Proof of Lemma \ref{optimality_result}]
	Rearranging terms in Lemma \ref{weak_expected_drift} we get
\begin{align}
	& \frac{1}{\epsilon} \sum_{l \in \mathcal{L}} \left[ U_l( x_l^* ) - U_l(x_l(t)) \right] \notag \\
	& \leq B_1 -  B_5 \sum_{l \in \mathcal{L}} q_l - E\left[ V(\mathbf{q}(t+1)) | \mathbf{q}(t) = \mathbf{q} \right] + V(\mathbf{q}) \notag \\
	& \leq B_1 - E\left[ V(\mathbf{q}(t+1)) | \mathbf{q}(t) = \mathbf{q} \right] + V(\mathbf{q}). \notag
\end{align}

This inequality is conditioned on the event $\{ \mathbf{q}(t) = \mathbf{q} \}$, so averaging over possible values of $\mathbf{q}(t)$ we obtain
\begin{align}
	& \frac{1}{\epsilon} E\left[ \sum_{l \in \mathcal{L}} U_l( x_l^* ) - U_l(x_l(t)) \right] \notag \\
	& \leq B_1 - E\left[ V(\mathbf{q}(t+1)) \right] + E\left[ V(\mathbf{q}(t)) \right]. \notag
\end{align}

Adding terms from time slots $1$ to $T$ and dividing both sides of the inequality by $T$ we get
\begin{align}
	& \frac{1}{\epsilon} E\left[ \sum_{l \in \mathcal{L}} \left\{ U_l( x_l^* ) - \frac{1}{T} \sum_{t=1}^T U_l(x_l(t)) \right\} \right] \notag \\
	& \leq B_1 - \frac{1}{T} E\left[ V(\mathbf{q}(T+1)) \right] + \frac{1}{T} E\left[ V(\mathbf{q}(1)) \right] \notag \\
	& \leq B_1 + \frac{1}{T} E\left[ V(\mathbf{q}(1)) \right], \label{lyapunov_pos}
\end{align}
where \eqref{lyapunov_pos} follows from the fact that the Lyapunov function is non-negative. Taking the limit as $T \rightarrow \infty$ and assuming that $E\left[ V(\mathbf{q}(1)) \right] < \infty$ we obtain
\begin{align*}
	\lim_{T \rightarrow \infty} E\left[ \sum_{l \in \mathcal{L}} \left\{ U_l( x_l^* ) - \frac{1}{T} \sum_{t=1}^T U_l(x_l(t)) \right\} \right] & \leq B_1 \epsilon.
\end{align*}

Rearranging terms
\begin{align*}
	\lim_{T \rightarrow \infty} \sum_{l \in \mathcal{L}} \left\{ U_l( x_l^* ) - E\left[ \frac{1}{T} \sum_{t=1}^T U_l(x_l(t)) \right] \right\} \leq B_1 \epsilon.
\end{align*}

Finally, using Jensen's inequality \cite{Boyd04} and the fact that $U_l(\cdot)$ is a concave function, we have
\begin{align}
	& \lim_{T \rightarrow \infty} \sum_{l \in \mathcal{L}} \left\{ U_l( x_l^* ) - U_l\left( E\left[ \frac{1}{T} \sum_{t=1}^T  x_l(t) \right] \right) \right\} \notag \\
	& \leq \lim_{T \rightarrow \infty} \sum_{l \in \mathcal{L}} \left\{ U_l( x_l^* ) - E\left[ \frac{1}{T} \sum_{t=1}^T U_l(x_l(t)) \right] \right\} \notag \\
	& \leq B_1 \epsilon. \notag
\end{align}
\end{proof}

\section{Proof of Lemma \ref{instability}}
\label{proof_instability}

The proof is based on the \emph{Strict Separating Hyperplane Theorem} \cite[Appendix B.3]{Luenberger03}, which says that since $\mathcal{C}(\mathcal{L})$ is a convex set and if $\mathbf{x} \notin \mathcal{C}(\mathcal{L})$, then there exists a vector $\mathbf{b}$ such that
\begin{equation}
\label{separating_hyperplane}
\mathbf{b}^T \mathbf{x} \geq \max_{\mathbf{z} \in \mathcal{C}(\mathcal{L})} \mathbf{b}^T \mathbf{z} + \beta
\end{equation}
for some $\beta > 0$. We can let $\mathbf{b} = \mathbf{x} - \mathbf{x_0}$, where
\begin{equation}
\label{closest_point}
	\mathbf{x_0} = \mathop{ \arg \min }_{ \mathbf{z} \in \mathcal{C}(\mathcal{L}) } | \mathbf{x} - \mathbf{z} |.
\end{equation}

First, we highlight that $\mathbf{b}$ is a non-negative vector. To see this, assume that there exists $\tilde{l} \in \mathcal{L}$ such that $b_{\tilde{l}} = x_{\tilde{l}} - x_{0\tilde{l}} < 0$. Defining the vector $\mathbf{x'}$ such that $x'_l = x_{0l}$ for $l \in \mathcal{L} \setminus \{ \tilde{l} \}$ and $x'_{ \tilde{l} } = x_{ \tilde{l} }$, we note that $\mathbf{x'} \in \mathcal{C}(\mathcal{L})$ and it can be checked that
\begin{equation*}
	| \mathbf{x} - \mathbf{x'} | < | \mathbf{x} - \mathbf{x_0} |,
\end{equation*}
which contradicts \eqref{closest_point}. Second, since  $\mathbf{x} \notin \mathcal{C}(\mathcal{L})$ we note that $\mathbf{b}$ must have at least a positive element.

Now let us consider the arrival process $a_l(t)$ such that $E[ a_l(t) ] = x_l$ for all time slots $t$ and all $l \in \mathcal{L}$, and consider that arrivals are independent of the queue lengths. Defining the function
\begin{equation*}
	V(\mathbf{q}) = \mathbf{b}^T \mathbf{q}
\end{equation*}
we have the following drift analysis
\begin{align}
	& E\left[ \left. V(\mathbf{q}(t+1)) \right| \mathbf{q}(t) = \mathbf{q} \right] - V(\mathbf{q}) \notag \\
	& = E\left[ \left. \sum_{l \in \mathcal{L}} b_l q_l(t+1) \right| \mathbf{q}(t) = \mathbf{q} \right] - \sum_{l \in \mathcal{L}} b_l q_l \notag \\
	& = E\left[ \left. \sum_{l \in \mathcal{L}} b_l \left\{ q_l + a_l(t) - d_l(t) \right\}^+ \right| \mathbf{q}(t) = \mathbf{q} \right] - \sum_{l \in \mathcal{L}} b_l q_l \notag \\
	& \geq E\left[ \left. \sum_{l \in \mathcal{L}} b_l \left\{ q_l + a_l(t) - d_l(t) \right\} \right| \mathbf{q}(t) = \mathbf{q} \right] - \sum_{l \in \mathcal{L}} b_l q_l \notag \\
	& = E\left[ \left. \sum_{l \in \mathcal{L}} b_l \left\{ a_l(t) - d_l(t) \right\} \right| \mathbf{q}(t) = \mathbf{q} \right] \notag \\
	& = \sum_{l \in \mathcal{L}} b_l \left\{ x_l - E\left[ \left. c_l(t) s_l(t) \right| \mathbf{q}(t) = \mathbf{q} \right] \right\} \label{mu_l2} \\
	& = \sum_{l \in \mathcal{L}} b_l \left\{ x_l - \mu_l(t) \right\} \notag \\
	& = \mathbf{b}^T \mathbf{x} - \mathbf{b}^T \mathbf{\mu}(t) \notag \\
	& \geq \mathbf{b}^T \mathbf{x} - \max_{\mathbf{z} \in \mathcal{C}(\mathcal{L})} \mathbf{b}^T \mathbf{z} \label{mu_in_C} \\
	& \geq \beta, \label{beta}
\end{align}
where \eqref{mu_l2} uses \eqref{def_dl}, \eqref{mu_in_C} follows from Definition \ref{def_capacity}, where it can be checked that $\mu(t) \in \mathcal{C}(\mathcal{L})$, and \eqref{beta} follows from \eqref{separating_hyperplane}. But this implies that $\lim_{ t \rightarrow \infty } E\left[ V(\mathbf{q}(t)) \right] = \infty$, which proves that the system is not stable in the mean.
\hfill $\blacksquare$

\section{Proof of Lemma \ref{stability}}
\label{proof_stability}

Consider the Bernoulli arrival process $a_l(t)$ such that $E[ a_l(t) ] = x_l < X$, variance upper-bounded by $\sigma^2$, $P(a_l(t)=0)>0$ and $P(a_l(t)=1)>0$ for all time slots $t$ and all $l \in \mathcal{L}$.\footnote{The last two conditions guarantee that the Markov chain $\mathbf{q}(t)$ is irreducible and aperiodic and can be replaced by similar assumptions.} Furthermore, consider that the arrivals are independent between time slots.

Define the scheduling policy $P_{\mathcal{L}}(\mathbf{s})$ such that
\begin{equation}
\label{service_rate}
	(1+\Delta) x_l = \sum_{\mathbf{s} \in \mathcal{S}(\mathcal{L})} \bar{c}_l s_l P_{\mathcal{L}}(\mathbf{s})
\end{equation}
for all $l \in \mathcal{L}$. From Definition \ref{def_capacity} we know that such policy exists since $(1+\Delta) \mathbf{x} \in \mathcal{C}(\mathcal{L})$ because $\mathbf{x} \in \mathcal{C}(\mathcal{L})/(1+\Delta)$.

If we use the Lyapunov function $V(\mathbf{q}) = \frac{1}{2} \sum_{l \in \mathcal{L}} q_l^2$, then we have the following drift analysis

\begin{align}
	& E\left[ V(\mathbf{q}(t+1)) | \mathbf{q}(t) = \mathbf{q} \right] - V(\mathbf{q}) \notag \\
	& = E\left[ \left. \frac{1}{2} \sum_{l \in \mathcal{L}} q_l^2(t+1) \right\vert \mathbf{q}(t) = \mathbf{q} \right] - \frac{1}{2} \sum_{l \in \mathcal{L}} q_l^2 \notag \\
	& = E\left[ \frac{1}{2} \sum_{l \in \mathcal{L}} \left\{ \left[ q_l + a_l(t) - d_l(t) \right]^+ \right\}^2 \right] - \frac{1}{2} \sum_{l \in \mathcal{L}} q_l^2 \notag \\
	& \leq E\left[ \frac{1}{2} \sum_{l \in \mathcal{L}} \left[ q_l + a_l(t) - d_l(t) \right]^2 \right] - \frac{1}{2} \sum_{l \in \mathcal{L}} q_l^2 \notag \\
	& = E\left[ \sum_{l \in \mathcal{L}} q_l \left[ a_l(t) - d_l(t) \right] + \frac{1}{2} \sum_{l \in \mathcal{L}} \left[ a_l(t) - d_l(t) \right]^2 \right]  \notag \\
	& \leq E\left[ \sum_{l \in \mathcal{L}} q_l \left[ a_l(t) - d_l(t) \right] \right] + \frac{1}{2} \sum_{l \in \mathcal{L}} E\left[ a_l^2(t) + d_l^2(t) \right] \notag \\
	& \leq E\left[ \sum_{l \in \mathcal{L}} q_l \left[ a_l(t) - d_l(t) \right] \right] + \beta_1 \notag \\
	& = E\left[ \sum_{l \in \mathcal{L}} q_l \left[ a_l(t) -  \bar{c}_l s_l(t) \right] \right] + \beta_1 \label{c_mu_l1} \\
	& = \sum_{l \in \mathcal{L}} q_l \left[ x_l -  (1+\Delta) x_l \right] + \beta_1 \label{c_mu_l2} \\
	& = \beta_1 - \Delta \sum_{l \in \mathcal{L}} q_l x_l \notag \\
	& \leq \beta_1 - \beta_2 \sum_{l \in \mathcal{L}} q_l \notag
\end{align}
where $\beta_1 =L \left( X^2 + \sigma^2 + 1 \right)/2$, \eqref{c_mu_l1} is a consequence of \eqref{def_dl}, \eqref{c_mu_l2} follows from \eqref{service_rate}, and $\beta_2 = \Delta \min_{l \in \mathcal{L} } \left\{ x_l \right\}$.

Since $\beta_2>0$ because $x_l > 0$ for all $l \in \mathcal{L}$, we have that the expected drift is negative but for a finite set of values of $\mathbf{q}(t)$, thus the Markov chain $\mathbf{q}(t)$ is positive recurrent. A consequence of this is that the expected queue lengths are finite, which concludes the proof.
\hfill $\blacksquare$
\end{ARXIV}

%
%
%
%
%
%
\IEEEtriggeratref{19}


%

\bibliographystyle{IEEEtran}
\bibliography{IEEEfull,../bibtex/admission,../bibtex/ACR}

\begin{thebibliography}{10}
\providecommand{\url}[1]{#1}
\csname url@samestyle\endcsname
\providecommand{\newblock}{\relax}
\providecommand{\bibinfo}[2]{#2}
\providecommand{\BIBentrySTDinterwordspacing}{\spaceskip=0pt\relax}
\providecommand{\BIBentryALTinterwordstretchfactor}{4}
\providecommand{\BIBentryALTinterwordspacing}{\spaceskip=\fontdimen2\font plus
\BIBentryALTinterwordstretchfactor\fontdimen3\font minus
  \fontdimen4\font\relax}
\providecommand{\BIBforeignlanguage}[2]{{%
\expandafter\ifx\csname l@#1\endcsname\relax
\typeout{** WARNING: IEEEtran.bst: No hyphenation pattern has been}%
\typeout{** loaded for the language `#1'. Using the pattern for}%
\typeout{** the default language instead.}%
\else
\language=\csname l@#1\endcsname
\fi
#2}}
\providecommand{\BIBdecl}{\relax}
\BIBdecl

\bibitem{Tang05}
J.~Tang, G.~Xue, and W.~Zhang, ``Interference-aware topology control and {QoS}
  routing in multi-channel wireless mesh networks,'' in \emph{Proc. 6th {ACM}
  International Symposium on Mobile Ad Hoc Networking and Computing
  ({MobiHoc})}, Urbana-Champaign, IL, USA, May 25--27, 2005, pp. 68--77.

\bibitem{Sanzgiri06}
K.~Sanzgiri, I.~D. Chakeres, and E.~M. Belding-Royer, ``Pre-reply probe and
  route request tail: Approaches for calculation of intra-flow contention in
  multihop wireless networks,'' \emph{Mobile Networks and Applications},
  vol.~11, no.~1, pp. 21--35, Feb. 2006.

\bibitem{Chakeres07}
I.~D. Chakeres, E.~M. Belding-Royer, and J.~P. Macker, ``Perceptive admission
  control for wireless network quality of service,'' \emph{Ad Hoc Networks},
  vol.~5, no.~7, pp. 1129--1148, Sep. 2007.

\bibitem{Yang05}
Y.~Yang and R.~Kravets, ``Contention-aware admission control for ad hoc
  networks,'' \emph{{IEEE} Transactions on Mobile Computing}, vol.~4, no.~4,
  pp. 363--377, Apr./May 2005.

\bibitem{Xue03}
Q.~Xue and A.~Ganz, ``Ad hoc {QoS} on-demand routing ({AQOR}) in mobile ad hoc
  networks,'' \emph{Journal of Parallel and Distributed Computing}, vol.~63,
  no.~2, pp. 154--165, Feb. 2003.

\bibitem{Lee00}
S.-B. Lee, G.-S. Ahn, X.~Zhang, and A.~T. Campbell, ``{INSIGNIA}: An {IP}-based
  quality of service framework for mobile ad hoc networks,'' \emph{Journal of
  Parallel and Distributed Computing}, vol.~60, no.~4, pp. 374--406, Apr. 2000.

\bibitem{Ahn02}
G.-S. Ahn, A.~T. Campbell, A.~Veres, and L.-H. Sun, ``{SWAN}: Service
  differentiation in stateless wireless ad hoc networks,'' in \emph{Proc.
  {IEEE} {INFOCOM}}, vol.~2, New York, NY, USA, Jun. 23--27, 2002, pp.
  457--466.

\bibitem{Luo04}
H.~Luo, S.~Lu, V.~Bharghavan, J.~Cheng, and G.~Zhong, ``A packet scheduling
  approach to {QoS} support in multihop wireless networks,'' \emph{Mobile
  Networks and Applications ({MONET})}, vol.~9, no.~3, pp. 193--206, Jun. 2004.

\bibitem{Ramanathan98}
R.~Ramanathan and M.~Steenstrup, ``Hierarchically-organized, multihop mobile
  wireless networks for quality-of-service support,'' \emph{Mobile Networks and
  Applications ({MONET})}, vol.~3, no.~1, pp. 101--119, Jun. 1998.

\bibitem{Shah05}
S.~H. Shah and K.~Nahrstedt, ``Guaranteeing throughput for real-time traffic in
  multi-hop {IEEE} 802.11 wireless networks,'' in \emph{Proc. Military
  Communication Conference (Milcom)}, Atlantic City, NJ, USA, Oct. 17--20,
  2005.

\bibitem{Liao02}
W.-H. Liao, Y.-C. Tseng, and K.-P. Shih, ``A {TDMA}-based bandwidth reservation
  protocol for {QoS} routing in a wireless mobile ad hoc network,'' in
  \emph{Proc. {IEEE} International Conference on Communications ({ICC})},
  vol.~5, New York, NY, USA, Apr. 28/May 2, 2002, pp. 3186--3190.

\bibitem{Zhu02}
C.~Zhu and M.~S. Corson, ``{QoS} routing for mobile ad hoc networks,'' in
  \emph{Proc. {IEEE} {INFOCOM}}, vol.~2, New York, NY, USA, Jun. 23--27, 2002,
  pp. 958--967.

\bibitem{Guo04}
S.~Guo and O.~Yang, ``{QoS}-aware minimum energy multicast tree construction in
  wireless ad hoc networks,'' \emph{Ad Hoc Networks}, vol.~2, no.~3, pp.
  217--229, Jul. 2004.

\bibitem{Lin99}
C.~R. Lin and J.-S. Liu, ``{QoS} routing in ad hoc wireless networks,''
  \emph{{IEEE} Journal on Selected Areas in Communications}, vol.~17, no.~8,
  pp. 1426--1438, Aug. 1999.

\bibitem{Liu05}
H.~Liu, X.~Jia, D.~Li, and C.~Lee, ``Bandwidth guaranteed call admission in
  {TDMA/CDMA} ad hoc wireless networks,'' \emph{Ad Hoc Networks}, vol.~3,
  no.~6, pp. 689--701, Nov. 2005.

\bibitem{Singh07}
S.~Singh, P.~A.~K. Acharya, U.~Madhow, and E.~M. Belding-Royer, ``Sticky
  {CSMA/CA}: Implicit synchronization and real-time {QoS} in mesh networks,''
  \emph{Ad Hoc Networks}, vol.~5, no.~6, pp. 744--768, Aug. 2007.

\bibitem{Jaramillo08b}
J.~J. Jaramillo and R.~Srikant, ``Admission control and routing in multi-hop
  wireless networks,'' in \emph{Proc. 47th {IEEE} Conference on Decision and
  Control ({CDC})}, Cancun, Mexico, Dec. 9--11, 2008, pp. 2356--2361.

\bibitem{Awerbuch93}
B.~Awerbuch, Y.~Azar, and S.~Plotkin, ``Throughput-competitive on-line
  routing,'' in \emph{Proc. 34th Annual Symposium on Foundations of Computer
  Science}, Palo Alto, CA, USA, Nov. 3--5, 1993, pp. 32--40.

\bibitem{Awerbuch94a}
B.~Awerbuch, Y.~Azar, S.~Plotkin, and O.~Waarts, ``Competitive routing of
  virtual circuits with unknown duration,'' in \emph{Proc. 5h Annual {ACM-SIAM}
  Symposium on Discrete Algorithms}, Arlington, VA, USA, Jan. 23--25, 1994, pp.
  321--327.

\bibitem{Plotkin95}
S.~Plotkin, ``Competitive routing of virtual circuits in {ATM} networks,''
  \emph{{IEEE} Journal on Selected Areas in Communications}, vol.~13, no.~6,
  pp. 1128 -- 1136, Aug. 1995.

\bibitem{Jaramillo11b}
J.~J. Jaramillo and R.~Srikant, ``Optimal scheduling for fair resource
  allocation in ad hoc networks with elastic and inelastic traffic,''
  \emph{{IEEE/ACM} Transactions on Networking}, vol.~19, no.~4, pp. 1125--1136,
  Aug. 2011.

\bibitem{Neely05}
M.~J. Neely, E.~Modiano, and C.-P. Li, ``Fairness and optimal stochastic
  control for heterogeneous networks,'' in \emph{{IEEE} {INFOCOM}}, vol.~3,
  Miami, FL, USA, Mar. 13--17, 2005, pp. 1723--1734.

\bibitem{Stolyar05}
A.~Stolyar, ``Maximizing queueing network utility subject to stability: Greedy
  primal-dual algorithm,'' \emph{Queueing Systems}, vol.~50, no.~4, pp.
  401--457, Aug. 2005.

\bibitem{Eryilmaz05}
A.~Eryilmaz and R.~Srikant, ``Fair resource allocation in wireless networks
  using queue-length-based scheduling and congestion control,'' in \emph{{IEEE}
  {INFOCOM}}, vol.~3, Miami, FL, USA, Mar. 13--17, 2005, pp. 1794--1803.

\bibitem{Eryilmaz05b}
A.~Eryilmaz, R.~Srikant, and J.~R. Perkins, ``Stable scheduling policies for
  fading wireless channels,'' \emph{{IEEE/ACM} Transactions on Networking},
  vol.~13, no.~2, pp. 411--424, Apr. 2005.

\bibitem{Boyd04}
S.~Boyd and L.~Vandenberghe, \emph{Convex Optimization}, 1st~ed.\hskip 1em plus
  0.5em minus 0.4em\relax New York, NY: Cambridge University Press, Mar. 2004.

\bibitem{Luenberger03}
D.~G. Luenberger, \emph{Linear and Nonlinear Programming}, 2nd~ed.\hskip 1em
  plus 0.5em minus 0.4em\relax Norwell, MA: Kluwer Academic Publishers, 2003.

\end{thebibliography}
%
%
\end{document}